\newtheorem{theorem}{Theorem}
\newtheorem{lemma}[theorem]{Lemma}
\newenvironment{proof}{{\bf Proof.}}{$\Box$}
\renewcommand{\Re}{\mbox{$\mathbb{R}$}}
\newcommand\Hess{{\cal H}}
\newcommand\dest{{\sf dest}}
\newcommand\Ascend{{\cal A}}
\newcommand\Modes{{\cal M}}
\newcommand\half{{\frac12}}
\newenvironment{enum}{
\begin{enumerate}
  \setlength{\itemsep}{1pt}
  \setlength{\parskip}{0pt}
  \setlength{\parsep}{0pt}
}{\end{enumerate}}
\let\hat\widehat
\let\tilde\widetilde
\begin{document}

\begin{center}
{\bf\Large Nonparametric Inference For Density Modes}\\
\vspace{.5cm}
{\bf Christopher R. Genovese, Marco Perone-Pacifico,\\
Isabella Verdinelli and Larry Wasserman}\\
Carnegie Mellon University and University of Rome
\end{center}

\begin{center}
December 29, 2013
\end{center}

\begin{quote}
We derive nonparametric confidence intervals
for the eigenvalues of the Hessian at modes
of a density estimate.
This provides information about the strength
and shape of modes and
can also be used
as a significance test.
We use a data-splitting approach
in which potential modes are identified using the
first half of the data and
inference is done with the second half of the data.
To get valid confidence sets for the eigenvalues,
we use
a bootstrap based on 
an elementary-symmetric-polynomial (ESP) transformation.
This leads to valid bootstrap confidence sets
regardless of any multiplicities in the eigenvalues.
We also suggest a new method for bandwidth selection, namely,
choosing the bandwidth to maximize the number of significant modes.
We show by example that this method works well.
Even when the true distribution is singular,
and hence does not have a density,
(in which case cross validation chooses a zero bandwidth),
our method chooses a reasonable bandwidth.
\end{quote}

Key words: {\em bootstrap, density estimation, modes, persistence}.

\baselineskip=16pt

\section{Introduction}

Figure \ref{fig::motivation}
shows a one-dimensional density estimate with
two modes.
The leftmost mode is likely to correspond to a real mode
in the true density.
But the second smaller mode on the right may be due to
random fluctuation.
How can we tell a real mode from random fluctuation?
In this paper, we provide a simple hypothesis test to answer this question
that is easy to implement, even in multivariate problems.
The basic idea is this:
a confidence interval for the second derivative of the density
will be strictly negative for the left mode
but is likely to cross 0 for the right mode.

Let $Z_1,\ldots, Z_n\in \mathbb{R}^d$ be a sample from a 
distribution $P$ with density $p$.
We assume that the gradient $g$ and Hessian $\Hess$ of $p$
are bounded continuous functions.
Furthermore,
we assume that $p$ has finitely many, well-separated modes
$m_1,\ldots, m_{k_0}$.
We do not assume that $k_0$ is known.
Our goal is to estimate the modes
and to give confidence sets that provide
shape information about the estimated modes.

There are many reasons for mode hunting
and many methods to find modes;
see, for example,
\cite{klemela2009smoothing, li2007nonparametric, 
dumbgen2008multiscale}.
In particular, modes can be used as the basis of
nonparametric clustering \citep{chacon, chazal2011persistence, Comaniciu,
Fukunaga,li2007nonparametric}.

There are several difficulties
in defining tests for modes.
Consider a point $x\in \mathbb{R}^d$
and suppose we want to test
$$
H_0: \ x\ \mbox{is not a mode\ of }p\ \ \ \ \ \ \ \ \mbox{versus}\ \ \ \ \ \ \ \ 
H_1: \ x\ \mbox{is a mode\ of }p.
$$
First, testing the null hypothesis of ``no mode'' 
raises problems, analogous to testing the null that a mean is not zero,
because the alternative forms a measure zero set.
More precisely, 
if $\nabla p(x) \equiv g(x) = 
(g_1(x),\ldots,g_d(x))^T$ is the gradient of $p$ at $x$,
$\lambda_1(x) \geq \cdots \geq \lambda_d(x)$ are the 
eigenvalues of the Hessian $\Hess(x)$,
and $\Omega = \mathbb{R} \times \mathbb{R}^d$,
then 
$H_0 = \Omega - H_1$ and
$$
H_1=\Bigl\{ (\lambda_1,g)\in \Omega:\ \lambda_1 < 0,\ 
g = (0,\ldots, 0)^T \Bigr\}.
$$
is a measure zero subset of $\Omega$.
No meaningful test can be constructed
of such a ``reverse null hypothesis.''
The second problem is that there are uncountably
many possible locations at which a mode can occur,
leading potentially to a difficult multiple testing problem.
Finally, verifying that a mode exists requires making
inference about eigenvalues of the Hessian.
But the eigenvalues are not continuously differentiable 
functions of the Hessian
which makes methods like the bootstrap and the delta method
invalid.

\begin{figure}
\begin{center}
\includegraphics[scale=.3]{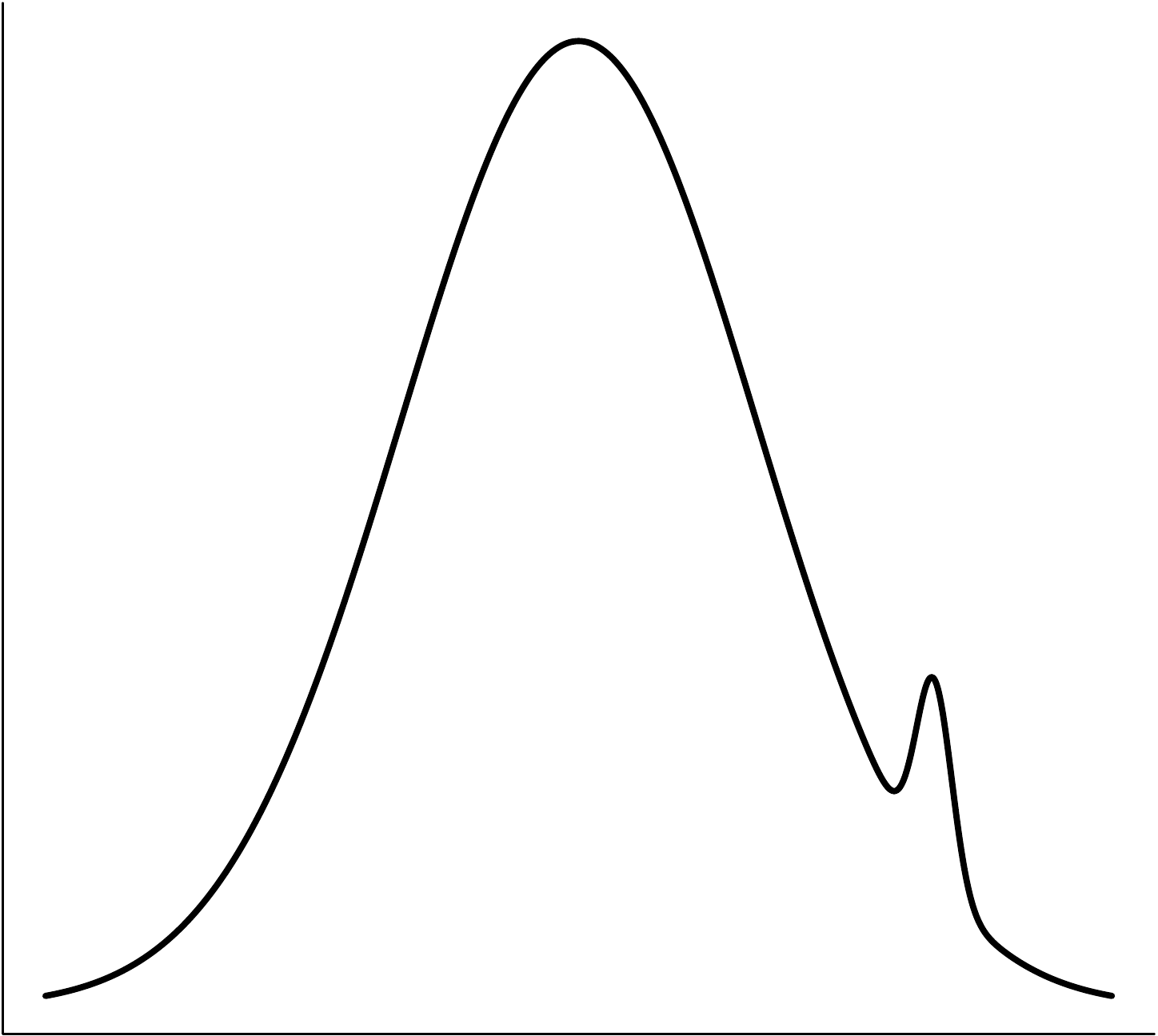}
\end{center}
\caption{\em The mode on the left appears to be real.
The mode on the right might be due to random fluctuation.}
\label{fig::motivation}
\end{figure}

We overcome these problems by 
combining several ideas:
\begin{enum}
\item We use data splitting to separate the process of finding
candidate modes from the process of hypothesis testing.
This ameliorates the multiplicity problem
and simplifies the hypothesis test as well.
Specifically, assume that the sample size is $2n$
and randomly split the data into two halves
$X =(X_1,\ldots, X_n)$ and
$Y =(Y_1,\ldots, Y_n)$.

\item In stage one, we use $X$ to find
a finite set of candidate modes
$\hat{\cal M}$.

\item 
In stage two,
we use the second half of the data $Y$
to estimate the Hessian of the density
at the candidate modes
$\hat{\cal M}$.
We transform the eigenvalues of the Hessian using 
elementary symmetric polynomials (ESP).
As noted in \cite{beran1985bootstrap},
the bootstrap leads to
asymptotically valid confidence sets for the
transformed eigenvalues.
We then invert the mapping to get a valid
confidence set for the eigenvalues.
This provides useful shape information about
the modes, which we call an eigenportrait.

\item 
The eigenportrait can be used to
formulate a test for the importance of the mode.
As a surrogate for testing
whether a candidate mode is not really a mode,
we instead test if $x$ is an ``approximate mode''.
This requires reformulating $H_1$
to capture the idea of an approximate mode.
There is no unique way to do this.
One possibility is to take
$H_0 = \Omega - H_1$ where
$$
H_1=\Bigl\{ (\lambda_1,g)\in \Omega:\ \lambda_1 < 0,\ 
||g||<\delta \Bigr\}
$$
where $\delta>0$ is a small positive constant.
In practice, the constraint
$||g||<\delta$ has no effect on the test
since the estimated gradient is 0 at the modes in stage one
and hence is likely to be close to 0 in stage two.
In practice, therefore, we simplify matters by just testing
$$
H_0: \lambda_1 \geq 0\ \ \ {\rm versus}\ \ \ H_1: \lambda_1 < 0.
$$
\end{enum}

{\bf Bias.}
We will use a kernel density estimator
$\hat p_h$ depending on a bandwidth $h>0$.
In this paper we view
$\hat p_h$ as an estimator of its mean $p_h$.
In particular, we view the modes
of $\hat p_h$
as estimates of the modes of $p_h$.
Of course, there is a bias
(typically of order $O(h^2)$)
that separates $p_h$ from $p$.
This bias is not of critical importance
when studying modes.
Instead, our primary concern is the variability
of $\hat p_h$ as an estimator of $p_h$.
Including the bias in any inferential procedures
for density estimators
raises well known complications since
the bias is harder to estimate than the density.
One can use various devices such as undersmoothing to
deal with the bias.
These difficulties are a distraction from our main thrust
and so we focus on inference for $p_h$.

\vspace{1cm}

{\bf Related Work.}
There is a large literature on mode finding.
Many methods are based on the mean-shift algorithm for finding modes
of kernel estimators; see \cite{Comaniciu, Fukunaga}.
An early paper in the statistics literature
on using kernel density estimators for mode hunting is
\cite{silverman1981using}.
He used the observed bandwidth at which a new mode appears
as a test for multimodality.
The properties of this test
are rather complicated, even in one-dimension: see
\cite{mammen1992some}.

Significance testing for modes of kernel estimators
was considered in
\cite{godtliebsen2002significance} and
\cite{duong2008feature}.
The latter reference is very related to this paper.
We discuss the differences in our approaches in Section 
\ref{section::testing}.
Asymptotic theory and bandwidth selection for
mode hunting and derivative estimation is discussed in
\cite{ChaconAndDuong, chacon2010multivariate, chacon2011asymptotics}.
\cite{donoho1991geometrizing} showed that the minimax rate
for estimating a mode in one dimension, assuming the density
is locally quadratic around the mode,
is $O(n^{-1/5})$.
Although not stated explicitly in that paper,
it is clear that the rate for $d$-dimensional densities
is $O(n^{-1/(4+d)})$.
\cite{konakov1974asymptotic}
studied the asymptotics of the mode estimator
in the multivariate case.
\cite{klemela2005adaptive}
considered adaptive estimation that takes into account the regularity
in a neighborhood of a mode.
\cite{dumbgen2008multiscale}
presented a method for constructing
multiscale confidence intervals for modes
but the method is only applicable to one-dimensional densities.

Clustering, based on modes,
was used in 
\cite{chacon} and
\cite{li2007nonparametric}.
\cite{chazal2011persistence}
considered a completely different approach to
mode-based clustering
on persistent homology; we compare this to the current approach
in Section \ref{section::persistence}.
Finally, we mention that there is a large literature on the related
problem of estimating level sets of density;
for example, see \cite{polonik1995measuring, cadre2006kernel, 
walther1997granulometric}.
The concept of
excess mass \cite{muller1991excess} 
provides a link between level sets and modes.

\vspace{1cm}

{\bf Outline.}
In Section \ref{section::modesandclusters},
we discuss mode hunting and mode clustering.
We present our hypothesis test in Section
\ref{section::testing}.
A crucial part of the test is a non-standard
bootstrap procedure described in
Section \ref{section::esp}.
We compare our approach to persistent homology in
Section \ref{section::persistence}.
Section \ref{sec::examples} presents some examples.
In Section \ref{section::bandwidth},
we use our procedure as part of a
new method for bandwidth selection for mode hunting.
Section \ref{section::theory} presents some
theoretical properties of the method.
Concluding remarks are in Section \ref{section::discussion}.

{\bf Notation.}
Given a density function $p$,
we use $g(x)$ to denote the gradient of $p$ at $x$
and we use $\Hess(x)$
to denote the
Hessian of $p$ at $x$.
The eigenvalues of $\Hess(x)$
are denoted by
$\lambda(x) = (\lambda_1(x),\ldots,\lambda_d(x))$ where
$\lambda_1(x) \geq \cdots \geq \lambda_d(x)$.
Since the eigenvalues at a mode are negative,
it is convenient to define
$\gamma(x) = (\gamma_1(x),\ldots,\gamma_d(x))$ where
$\gamma_j(x) = -\lambda_j(x)$.
For an $n\times r$ matrix $A$, define ${\rm vec}(A)$
to be the $nr \times 1$ column vector obtained by
stacking the columns of $A$, that is,
${\rm vec} A =(A_{11}, A_{21},\ldots,A_{n1},A_{12},\ldots, A_{nn})^T$.
Also, for symmetric matrices,
${\rm vech}$ is the vec operator applied only to the
upper triangular part of the matrix.
For a vector-valued function
$f=(f_1,\ldots, f_d)$
we follow \cite{chacon2011asymptotics},
by defining $D^{\otimes r} f$ as
$$
D^{\otimes r}f(x) = \left( \begin{array}{c}
                           D^{\otimes r}f_1(x)\\ 
                           \vdots \\
                           D^{\otimes r}f_d(x)
                          \end{array} \right).
$$
Here, $D^{\otimes r}$ denotes the $r^{\rm th}$ derivative.
Then, for the Hessian $\Hess f = \partial^2 f/(\partial x \partial x^T)$ we have
${\rm vec}\,\Hess f = D^{\otimes 2}f$.
In the special case $r=1$ we usually just write
$\nabla f$ for the gradient.
Also, we sometimes
use $\nabla^{(2)}$ for the second derivative.
The largest eigenvalue of a matrix $A$ is denoted by
$\lambda_1(A)$.
We use $C$ to denote a generic positive constant.

{\bf Assumptions.}
Throughout the paper we make the following assumptions.

(A1) The density $p$ is a bounded, continuous
density supported on a compact set
${\cal X}\subset \mathbb{R}^d$.

(A2) The gradient $g$ and Hessian ${\cal H}$ of $p$
are bounded and continuous. The Hessian is non-degenerate at all stationary points.

(A3) $p$ has finitely many 
modes $m_1,\ldots, m_{k_0}$
in the interior of ${\cal X}$.

(A4)
Let
\begin{equation}
\Delta = \min_{s\neq t}||m_s - m_t|| \ \ \ \ \ \mbox{and}\ \ \ \ \ 
L=\max_{1\leq j\leq k_0}\lambda_1({\cal H}(m_j)).
\end{equation}
We assume that 
$\Delta >0$ and $L < 0$.

(A5) The kernel $K$ used in the density
estimator is a symmetric probability density with bounded and continuous first
and second derivatives and bounded second moment.

\section{Modes and Clusters}
\label{section::modesandclusters}

One of our main motivations for finding significant modes
is so that they can be used for clustering.
Let $m_1,\ldots, m_{k_0}$ be the modes of $p$.
Assume that $p$ is a \emph{Morse function},
which means that the Hessian of $p$ at each
stationary point is non-degenerate.

Given any point $x\in \mathbb{R}^d$
there is a unique gradient ascent path, or integral curve,
passing through $x$
that eventually leads to one of the modes.
We define the clusters to be the ``basins of attraction'' of the modes,
the equivalence classes of points whose ascent paths lead to the same mode.
Formally, an integral curve through $x$ is a path
$\pi_x: \mathbb{R}\to \mathbb{R}^d$
such that $\pi_x(t)=x$ for some $t$ and such that
\begin{equation}\label{eq::diffeq}
\pi'_x(t) = \nabla p(\pi_x(t)).
\end{equation}
Integral curves never intersect (except at stationary points)
and they partition the space
(\cite{Matsumoto}).
Equation (\ref{eq::diffeq})
means that the path $\pi$
follows the direction of steepest ascent of $p$ through $x$.
The destination of the integral curve $\pi$ through a (non-mode) point $x$ is defined by
\begin{equation}
\dest(x) = \lim_{t\to\infty}\pi_x(t).
\end{equation}
(We define $\dest(m) = m$ for any mode $m$.)
It can then be shown that for all $x$,
$\dest(x) = m_j$ for some mode $m_j$.
That is: all integral curves lead to modes.
For each mode $m_j$, define the sets
\begin{equation}\label{eq::ascending}
\Ascend_j = \Bigl\{ x:\ \dest(x) = m_j \Bigr\}.
\end{equation}
These sets are known as the \emph{ascending manifolds},
and also known as
the cluster associated with $m_j$,
or the basin of attraction of $m_j$.
The $\Ascend_j$'s partition the space.
See Figure \ref{fig::Morse}.

\begin{figure}
\begin{center}
\includegraphics[scale=.8]{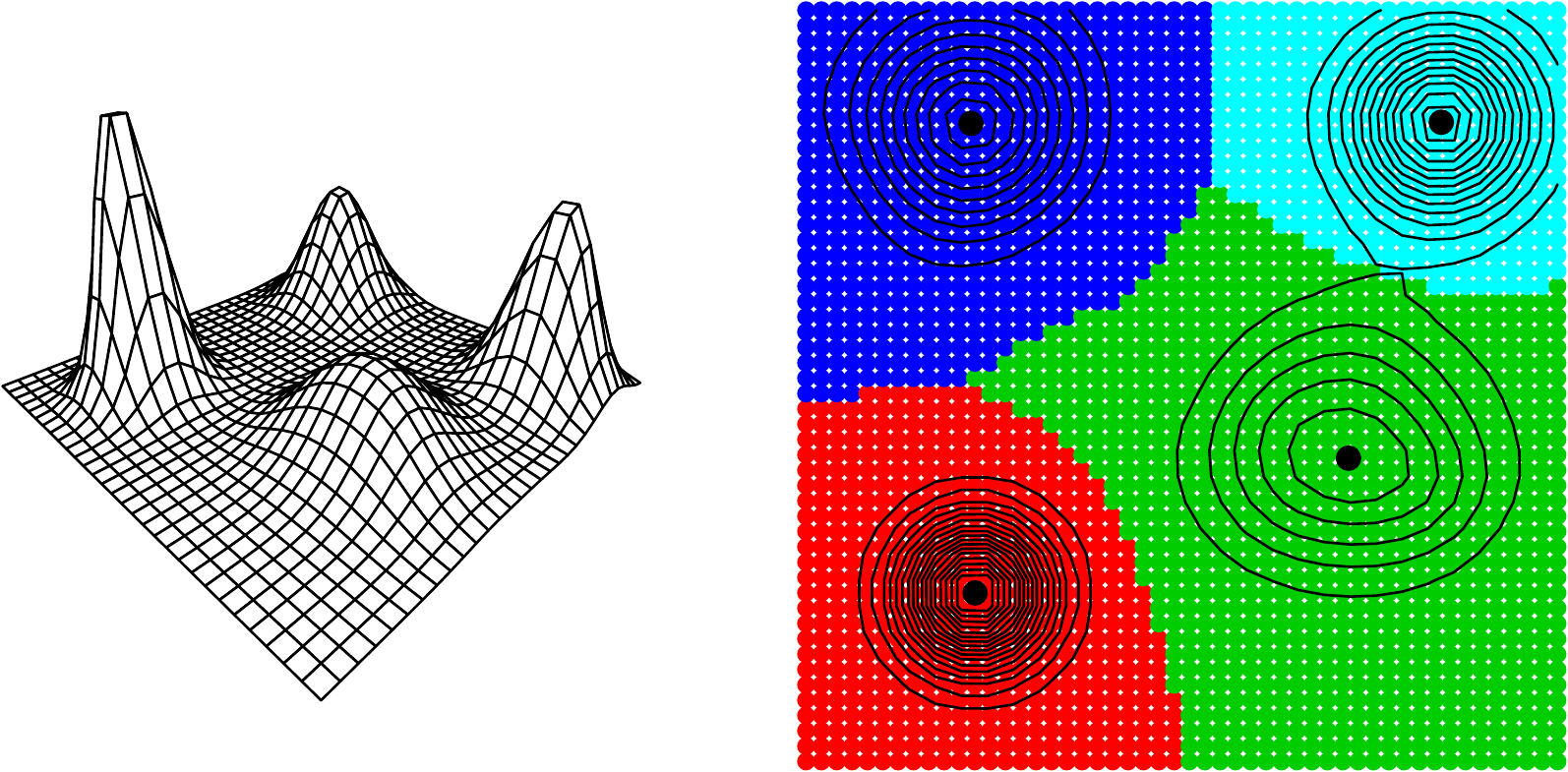}
\end{center}
\caption{\em The left plot shows a function with four modes.
The right plot shows the ascending manifolds (basins of attraction) corresponding to the four modes.}
\label{fig::Morse}
\end{figure}

Given data
$X_1,\ldots, X_n$
we construct an estimate $\hat p$ of the density.
Let $\hat m_1,\ldots, \hat m_k$
be the estimated modes
and let
$\hat\Ascend_1,\ldots, \hat\Ascend_k$
be the corresponding ascending manifolds derived from $\hat p$.
The sample clusters $C_1,\ldots, C_k$ are defined to be
$C_j = \bigl\{X_i:\ X_i\in \hat\Ascend_j\bigr\}$.
Before finding clusters,
it is important to find out which modes
are significant and which are explainable
as random fluctuations.
This is one of the motivations for the current paper.

We will estimate the density $p$ 
with the kernel density estimator 
\begin{equation}
\hat p(x) \equiv \hat p_h(x) =
\frac{1}{n}\sum_{i=1}^n \frac{1}{h^d}\,K\left(\frac{||x-X_i||}{h}\right)
\end{equation}
where $K$ is a smooth, symmetric kernel and $h>0$
is the bandwidth.
The mean of the estimator is
\begin{equation}
p_h(x) = \mathbb{E}[\hat p_h(x)] =
\int K(t) p(x + th) dt.
\end{equation}
In general, one can use a bandwidth matrix $H$
in the estimator, with 
\begin{equation}
\hat p(x) \equiv \hat p_H(x) = \frac{1}{n}\sum_{i=1}^n K_H(x-X_i)
\end{equation}
where $K_H(x) = |H|^{-\frac{1}{2}}K(H^{-\frac{1}{2}}x)$.
As discussed in \cite{ChaconAndDuong} and \cite{chacon2011asymptotics},
using a non-diagonal matrix can lead to
better density estimates than using a diagonal bandwidth matrix.
But for simplicity,
here we use a single, scalar bandwidth $h$,
corresponding to $H = h^2 I$.
As explained in the introduction,
in this paper we regard $\hat p_h$
as an estimator of $p_h$
and we aim to find the modes of $p_h$.

To locate the modes of $\hat p_h$
we use the \emph{mean shift algorithm} (\citep{Fukunaga, Comaniciu}),
which finds modes by approximating the steepest ascent paths.
(\cite{castromason}).
The algorithm is given in Figure \ref{fig::algorithm}.
The result of this process is a set of candidate modes
$\hat{\cal M} = \{\hat m_1,\ldots, \hat m_k\}$.
Note that $k$ is random
since it is the observed number of modes of the density estimator.

\begin{figure}
\fbox{\parbox{6in}{
\vspace{.3cm}
\begin{center}
\sf Mean Shift Algorithm\\
\end{center}
\begin{enum}
\item 
Input: $\hat p(x)$ and a mesh of points 
$A = \{a_1,\ldots, a_N\}$
(often taken to be the data points).
\item 
For each mesh point $a_j$,
set $a_j^{(0)} = a_j$ and
iterate the following equation until convergence:
$$
a_j^{(s+1)} \longleftarrow 
\frac{\sum_{i=1}^n X_i K\left(\frac{||a_j^{(s)}-X_i||}{h}\right)}
{\sum_{i=1}^n K\left(\frac{||a_j^{(s)}  -X_i||}{h}\right)}.
$$
\item Let
$\hat\Modes$ be the unique values of the set
$\{a_1^{(\infty)},\ldots,a_N^{(\infty)}\}$.
\item Output: $\hat\Modes$.
\end{enum}
}}
\caption{\em The Mean Shift Algorithm. (\cite{Fukunaga}; 
\cite{Comaniciu})}
\label{fig::algorithm}
\end{figure}

\section{The Method}
\label{section::testing}

For simplicity, assume that the sample size is even
and let $2n$ denote the sample size.
Our testing procedure involves the following steps:
\begin{enum}
\item Split the data randomly into two halves
$X=(X_1,\ldots, X_n)$ and
$Y=(Y_1,\ldots, Y_n)$, say.
\item Use $X$ to construct a density estimate 
$\hat p_{X,h}$ and find
candidate modes $\hat m_1,\ldots, \hat m_k$.
\item Use
$Y$
to construct another density estimate $\hat p_{Y,h}$
and compute
the Hessian
$\hat{\cal H}_{Y,h}$ of $\hat p_{Y,h}$
at each $\hat m_j$, where $1\leq j \leq k$.
Let $\hat\lambda_j = (\hat\lambda_{1j},\ldots, \hat\lambda_{dj})$
be the eigenvalues of
$\hat {\cal H}_{Y,h}(\hat m_j)$ and let
$\hat\gamma_j = (-\hat\lambda_{1j},\ldots, -\hat\lambda_{dj})$.
\item Construct a $1-\alpha/k$ confidence rectangle
$G_j$ for $\gamma_j = (\gamma_{1j},\ldots, \gamma_{dj})^T$
where
$\gamma_{sj} = - \lambda_{s}(\Hess_h(\hat m_j))$.
The collection of confidence rectangles
$G_1,\ldots,G_k$ is called the eigenportrait.
From $G_j$
we get a confidence interval
${\cal C}_j$ for the the leading
eigenvalue $\gamma_{1j} = - \lambda_{1}({\cal H}_{Y,h}(\hat m_j))$.
\item Reject
$H_0: \gamma_{1j} <0$ if\ \ $\inf\,\bigl\{x\in {\cal C}_j\bigr\} > 0$
and declare $\hat m_j$ to be a real mode.
\end{enum}

There are $k$ candidate modes.
At each mode,
we have $d$-dimensional vectors
$\lambda_j=(\lambda_{1j},\ldots,\lambda_{dj})$
and
$\gamma_j=(\gamma_{1j},\ldots,\gamma_{dj})$.
Here are some remarks on the steps.

{\bf Step 1 and 2:}
The purpose of the data splitting
is to assure the validity of the confidence intervals.
If we did not split the data,
we could instead get a valid test
by treating the estimated Hessian as a 
stochastic process over the whole space
and then estimating the maximum fluctuations of this process.
While this is possible,
splitting the data and focusing on finitely many points is much simpler.

{\bf Step 3:}
We estimate the Hessian at $\hat m_j$, $\hat{\cal H}_{Y,h}(\hat m_j)$, by
using the Hessian of the
density estimator from the second half of the data.
Specifically, with $H = h^2 I$, 
\begin{equation}
{\rm vec}\,\hat{\cal H}_{Y,h}(\hat m_j) = 
\frac{1}{n} |H|^{-\half} 
(H^{-\half})^{\otimes 2} \sum_{i=1}^n D^{\otimes 2}K(H^{-\half}(\hat m_j -Y_i)).
\end{equation}

{\bf Step 4.}
Using the method described later in Section \ref{section::esp},
we construct $1-\alpha/k$ confidence intervals
${\cal C}_j$ for
$\gamma_{1j}$, $j=1,\ldots, k$.
The validity of the bootstrap in
Section \ref{section::esp}, together with the independence
from sample splitting, ensures that
$$
\liminf_{n\to\infty}\ \mathbb{P}(\gamma_{j} \in G_j,\ {\rm for\ all\ }j)\geq 1-\alpha.
$$
We test
$$
H_{0j}: \gamma_{1j} \leq 0\ \ \ \ \ \mbox{versus}\ \ \ \ \ H_{1j}: \gamma_{1j} >0
$$
for $j=1,\ldots, k$ and we reject
$H_{0j}$ if the confidence set ${\cal C}_j$ lies above 0.

{\bf Step 5.}
In principle, we would like to test the null hypothesis 
{\em $H_{0j}: \ \hat m_j$ is not a mode} versus the alternative
{\em $H_{1j}: \ \hat m_j$ is a mode}
for $j=1,\ldots, k$.
But, as we explained earlier
it is not possible to construct a non-trivial test for
this hypothesis since $H_{1j}$ has measure 0.
Instead we could replace $H_{1j}$
with the statement: ``$\hat m_j$ is an approximate mode''.
This suggests testing
$\tilde H_{0j}$ versus $\tilde H_{1j}$ 
where
$$
\tilde H_{1j} = \Bigl\{ (\lambda_1,g):\ -\lambda_1 <0,\ \ ||g|| \leq \delta \Bigr\}
$$
for some $\delta>0$,
and
$\tilde H_{0j} = \tilde H_{1j}^c$.
However, thanks to the data-splitting,
testing $\tilde H_{0j}$ versus $\tilde H_{1j}$ 
is asymptotically equivalent to
testing $H_{0j}$ versus $H_{1j}$.
This follows since
$$
||\hat g_{Y,h}(\hat m_j)|| = ||\hat g_{X,h}(\hat m_j)|| + O_P \left(\frac{1}{n h^{d+2}}\right) =
0 + O_P \left(\frac{1}{n h^{d+2}}\right) =  o_P(1).
$$
Hence, with probability tending to 1,
$|| \hat g_{Y,h}(\hat m_j)|| < \delta$
and, asymptotically,
we reject
$\tilde H_{0j}$ if and only if
we reject
$H_{0j}$.
In summary,
we interpret the rejection
of $H_0: \gamma_{1j}<0$
to mean that $\hat m_j$ is an approximate mode.

{\bf Comparison with \cite{duong2008feature}.}
\cite{duong2008feature} describe an approach with several features similar to ours.
They carry out two statistical tests:
that the gradient is 0 and that the norm of the Hessian is 0.
They test these hypotheses at a large number of points,
with a multiple testing correction.
Regions where the gradient null is not rejected
and the Hessian null is rejected are deemed interesting.
Plotting these regions provides a useful visualization of the density's behavior.
Note that the hypotheses used and the goals are quite different between the
two methods.
Their method is more exploratory and provides effective visualizations.
Our method is intended to produce a definite, finite set of potential
modes, with a test for the significance of each.
Further, our goal is to provide a set of confidence intervals for
the eigenvalues of the Hessian at the estimated modes,
as we describe in the next section.

\section{The Telepathic Bootstrap}
\label{section::esp}

To implement the test described in the previous section,
we need to
construct a confidence interval for
$\gamma_1(x) = -\lambda_1(x)$,
for $x\in \hat\Modes$,
which requires some care.
Let 
\begin{equation}
\hat\lambda_1(x) \geq \hat\lambda_2(x) \geq \cdots \geq \hat\lambda_d(x)
\end{equation}
denote the eigenvalues of $\hat{\cal H}_{Y,h}(x)$. 
We construct confidence regions for the eigenvalues 
using the bootstrap.
Bootstrapping the eigenvalues poses some problems.
In general,
$\lambda(x) = (\lambda_1(x),\ldots, \lambda_d(x))$
is not a continuously differentiable function of ${\cal H}_h(x)$,
the Hessian of $p_h$.
As a result,
standard bootstrapping applied to the Hessian will not
produce valid confidence sets for the eigenvalues.
However, Beran and Srivastava (1985) note 
that if the eigenvalues are transformed  using elementary symmetric 
polynomials, then the confidence set obtained is valid
as we now explain.

Given ordered, not necessarily distinct, eigenvalues
$\lambda_1(x) \ge \lambda_2(x) \ge  \cdots \ge \lambda_d(x)$,
define the elementary symmetric polynomials (ESP) by
\begin{align} \label{eqn::one-to-one}
s_1(x)    &= \sum_{i=1}^d \lambda_i(x) \nonumber \\
s_2(x)    &= \sum_{i_1=1}^d \; \sum_{i_2=i_1+1}^d \lambda_{i_1}(x) \cdot \lambda_{i_2}(x)\nonumber \\
\cdots & \cdots   \\
s_k(x)    &= \sum_{i_1=1}^d \: \sum_{i_2=i_1+1}^d \ldots 
         \sum_{i_k=i_{k-1}+1}^d \lambda_{i_1}(x) \cdot \lambda_{i_2}(x) \ldots \cdot \lambda_{i_k}(x)
         \nonumber \\
\cdots & \cdots \nonumber \\
s_d(x)    &= \lambda_1(x) \cdot \lambda_2(x) \cdot \ldots \cdot \lambda_d(x). \nonumber
\end{align}
Conversely, $\lambda_1(x), \cdots, \lambda_d(x)$ are roots of the characteristic
polynomial
\begin{equation}
P(\lambda(x)) \ = \ \prod_{i=1}^d (\lambda_i(x) - \lambda(x))\ = 
\  (-1)^d\lambda^d(x) + \sum_{k=1}^d (-1)^k \ s_k  \ \lambda^{d-k}(x) = 0.
\end{equation}
Let $s(x)=(s_1(x),\ldots, s_d(x))$.
Note that all the eigenvalues are negative
if and only if
$(-1)^k s_k >0$ for all $k$.
Also, $s(x)$ is a continuously differentiable function of ${\cal H}_h(x)$ and
the map from $\lambda(x)$ to $s(x)$ is one-to-one.
Hence, we can write $s(x) = w(\lambda(x))$ and $\lambda(x) = w^{-1}(s(x))$.
See Figure \ref{fig::ESP}.

\begin{figure}
\begin{center}
\includegraphics[scale=.7]{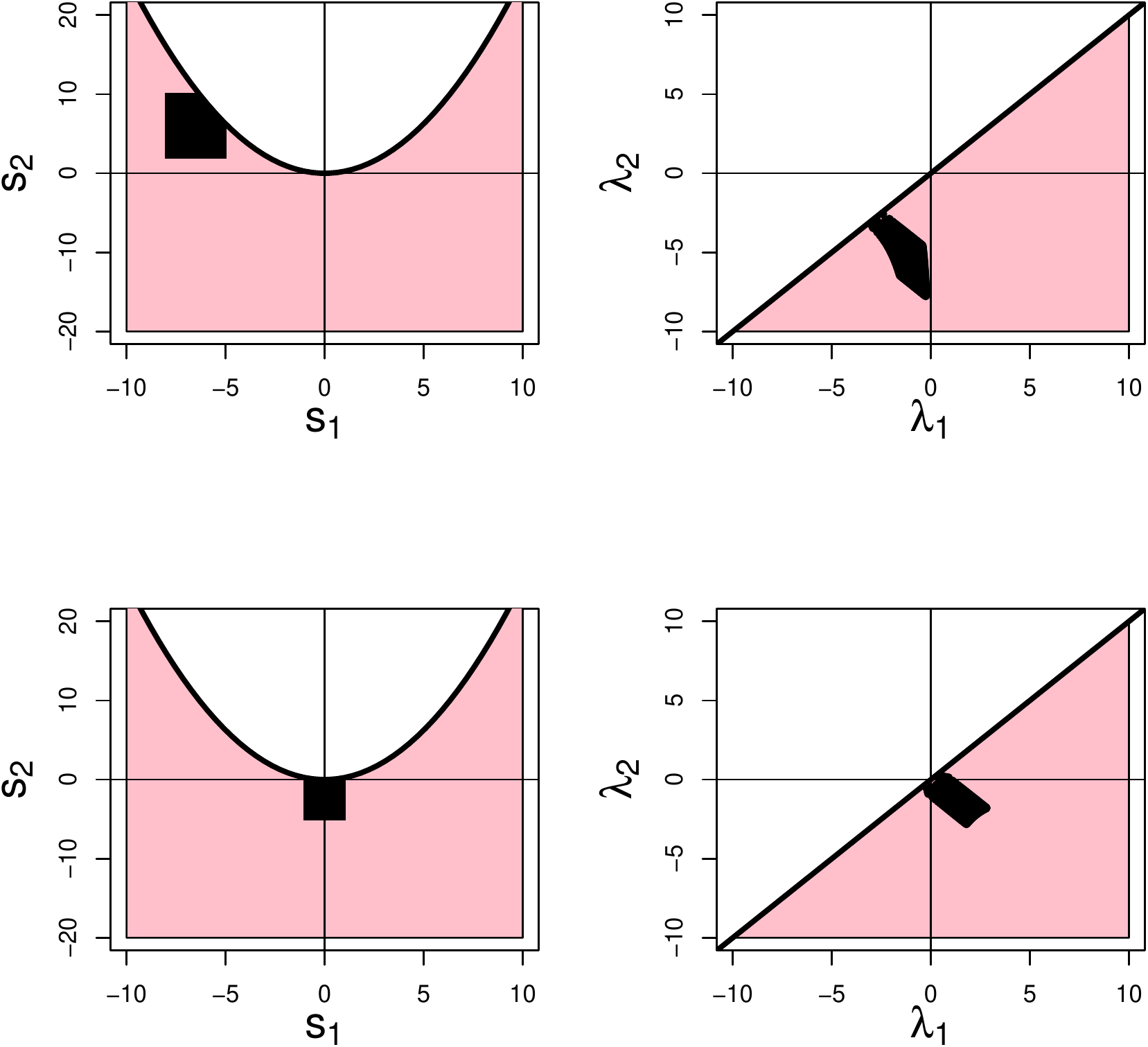}
\end{center}
\caption{\em 
Left: Examples of rectangles ${\cal S}$ in terms of $(s_1,s_2)$.
Right: Corresponding sets of eigenvalues $w^{-1}({\cal S})$.}
\label{fig::ESP}
\end{figure}

The steps in the bootstrap, at a particular candidate mode $\hat m_j$ 
(see Figure \ref{fig::ModeTest})
are as follows
(we suppress the subscript $j$): 
\begin{enum}
\item Let $\hat\lambda$ be the eigenvalues of the estimated Hessian and let $\hat s = w(\hat\lambda)$.
\item Draw $Y_1^*,\ldots, Y_n^* \sim P_n$ where $P_n$ is the empirical distribution
of $Y_1,\ldots, Y_n$.
\item Compute the density estimate, the Hessian and the estimates eigenvalues
$\lambda^*=(\lambda_1^*,\ldots, \lambda_d^*)$.
Compute the ESP-transformed eigenvalues
$s^*=(s_1^*,\ldots,s_d^*) = w(\lambda^*)$.
\item Repeat steps 2 and 3 $B$ times
yielding $B$ vectors
$s^{*1},\ldots,s^{*B}$.
\item Find the $1-\alpha/k$ bootstrap quantile $q$ defined by:
$$
\frac{1}{B}\sum_{b=1}^B I\Bigl( ||s^{*b} - \hat s||_\infty > q\Bigr) = \frac{\alpha}{k}.
$$
The set
\begin{equation}\label{eq::setS}
{\cal S} = \Bigl\{ s:\ ||s-\hat s||_\infty \leq q\Bigr\}
\end{equation}
is a $1-\alpha$ asymptotic confidence set for
$s=(s_1,\ldots, s_d)$.
\item Let
\begin{equation}
{\cal C} = \Biggl[ \min_{b\in J} (-\lambda^{*b}_1),\ \ \max_{b\in J} (-\lambda^{*b}_1)\Biggr]
\end{equation}
where $J = \Bigl\{1 \le b \le B:\ s^{*b}\in {\cal S}\Bigr\}.$
\end{enum}

The above procedure is used at each 
candidate mode $\hat m_j$
and hence we get
confidence sets
${\cal S}_1,\ldots, {\cal S}_k$
for $s_1,\ldots,s_k$,
confidence rectangles
$G_1,\ldots, G_k$ for the $\gamma_j$'s and
confidence intervals
${\cal C}_1,\ldots, {\cal C}_k$ for
$\gamma_{11},\ldots, \gamma_{1k}$.
Here,
$s_j = (s_{1j},\ldots, s_{dj})$
and $\gamma_{1j}$ is minus the largest eigenvalue of the Hessian at mode $\hat m_j$.

The last two steps deserve some explanation.
A confidence set
for $\gamma_{1j}$ at $\hat m_j$
is $w^{-1}({\cal S}_j)$.
From Corollary 1
of \cite{beran1985bootstrap}, 
it follows that
$$
\liminf_{n\to\infty}\ \mathbb{P}(s_j\in {\cal S}_j) \geq 1-\frac{\alpha}{k}
$$
and hence
$$
\liminf_{n\to\infty}\ \mathbb{P}(\gamma_{1j}\in w^{-1}({\cal S}_j)) \geq 1-\frac{\alpha}{k}
$$
Therefore,
\begin{equation}
\liminf_{n\to\infty}\ \mathbb{P}(\gamma_{1j}\in w^{-1}({\cal S}_j)\ {\rm for\ each\ }j) \geq 1-\alpha.
\end{equation}
We should point out
that the result in \cite{beran1985bootstrap}
applies to covariance matrices.
To adapt their results to the Hessian,
we need a central limit theorem
for the estimated Hessian.
Such a result is provided by
Theorem 3 of 
\cite{duong2008feature}
which shows that
\begin{equation}\label{eq::clt}
\sqrt{n}|H|^{1/4}
{\rm vech}
\left[ H^{1/2} \left( \hat\Hess_{Y,h}- \Hess_h\right) H^{1/2}\right]\rightsquigarrow
N(0,\Sigma_2)
\end{equation}
where
$\Sigma_2 = R ({\rm vech} \nabla^{(2)} K)p_h(x)$,
$R(g) = \int g(x) g^T(x) dx$.

Computing
$w^{-1}({\cal S}_j)$ exactly would require
calculating the inverse map $w^{-1}$ explicitly.
We do not know of any computationally efficient method for
computing the inverse map $w$.
However, we do know, by construction, that
$\lambda^{*b} = w^{-1}(s^{*b})$ for each bootstrap sample.
The set
$\bigl\{w^{-1}(s^{*b}):\ s^{*b}\in {\cal S}_j\bigr\}$
approximates
$w^{-1}({\cal S}_j)$ 
arbitrarily well as $B\to\infty$.
Thus, we can approximate 
$w^{-1}({\cal S}_j)$ 
by ${\cal C}_j$ in step 6.

\begin{figure}
\fbox{\parbox{6.5in}{
\vspace{.3cm}
\begin{center}
\sf Local Mode Testing Algorithm
\end{center}

\begin{enum}
\item Split the data into two halves $X$ and $Y$.
\item Using $X$, construct $\hat p_{X,h}$ and use the mean-shift algorithm to find the modes
$\hat{\cal M} = \{\hat m_1,\ldots \hat m_k\}$ of $\hat p_{X,h}$.
\item Using $Y$, find $\hat p_{Y,h}$
and its gradient $\hat g_{Y,h}$ and Hessian $\hat\Hess_{Y,h}$.
\item For each candidate mode $m=\hat m_j\in \hat\Modes$:
\begin{enum}
\item
Estimate the Hessian at $m$ and compute its eigenvalues 
$\hat\lambda_i(m)$,
$i=1, \ldots d.$ 
\item Compute elementary symmetric polynomials $s_i(m)$, 
according to \eqref{eqn::one-to-one} for $i=1,\ldots,d$.
\item
Generate $B$ bootstrap Hessians $\Hess^*_b(m)$, $b=1, \cdots, B$ and obtain
$B$ corresponding 
elementary symmetrical polynomial  $s_i^{*\,b}(m), i=1, \ldots d$.
\item
Compute confidence rectangle $G_j$ and
the confidence set ${\cal C}_j$.
\item Declare that there is a mode at $m=\hat m_j$ if the confidence set ${\cal C}_j$ lies entirely above 0.
\end{enum}
\end{enum}
}}
\caption{\em The Local Mode Testing Algorithm.}
\label{fig::ModeTest}
\end{figure}

We automatically get
confidence intervals for 
the $s^{\rm th}$ negative eigenvalue at the $j^{\rm th}$ mode
$\gamma_{sj}$ where $s=1,\ldots, d$
and $j=1,\ldots, k$.
Thus, in addition to the significance of the mode,
we get valuable information about the shape of the mode,
which we call the {\em eigenportrait.}
This will be illustrated in Section \ref{sec::examples}.

\section{Persistence}
\label{section::persistence}

There is a completely different approach
for eliminating non-significant modes
based on the theory of persistent homology
which has been the focus of recent research
(\cite{chazal2011persistence,
edelsbrunner2008persistent}).
We will not review 
persistent homology here but rather we describe
the salient points that are germane to the present paper.
The key ideas are from
\cite{chazal2011persistence}.

Consider a smooth density $p$ with
$M = \sup_x p(x) < \infty$.
The $t$-level set clusters
are the connected components of the set
$L_t=\{x:\ p(x) \geq t\}$.
Suppose we find the upper level sets
$L_t=\{x:\ p(x) \geq t\}$
as we vary $t$ from $M$ to $0$.
Persistent homology measures how the topology
of $L_t$ varies as we decrease $t$.
In our case,
we are only interested in the modes, which correspond to
the zeroth order homology.
(Higher order homology refers to holes, tunnels etc.)

Imagine setting $t=M$
and then gradually decreasing $t$.
Whenever we hit a mode, a new level set cluster is born.
As we decrease $t$ further, some clusters may merge
and we say that one of the clusters (the one born most recently) has died.
See Figure \ref{fig::explain}.

In summary, each mode $m_j$ has a death time and a birth time
denoted by $(d_j,b_j)$.
(Note that the birth time is larger than the death time because
we start at high density and move to lower density.)
The modes can be summarized with a persistence diagram
where we plot
the points
$(d_1,b_1),\ldots, (d_k,b_k)$ in the plane.
See Figure \ref{fig::explain}.
Points near the diagonal correspond to modes with short lifetimes.
\cite{chazal2011persistence}
suggested killing any mode with a short lifetime
$\ell_j = b_j - d_j$.
This requires choosing a significance threshold.
\cite{balakrishnan2013statistical}
suggest that this threshold can be based on
the bootstrap quantile $\epsilon_\alpha$ defined by
\begin{equation}
\epsilon_\alpha = \inf \Biggl\{ z:\ 
\frac{1}{B}\sum_{b=1}^B I\Bigl( ||\hat p_h^{*b} - \hat p_h||_\infty > z\Bigr) \leq \alpha \Biggr\}.
\end{equation}
Here,
$\hat p_h^{*b}$ is the density estimator based on the
$b^{\rm th}$ bootstrap sample.
This corresponds to killing a mode if it is in a
$2\epsilon_\alpha$ band around the diagonal.

\begin{figure}
\begin{center}
\includegraphics[scale=.5]{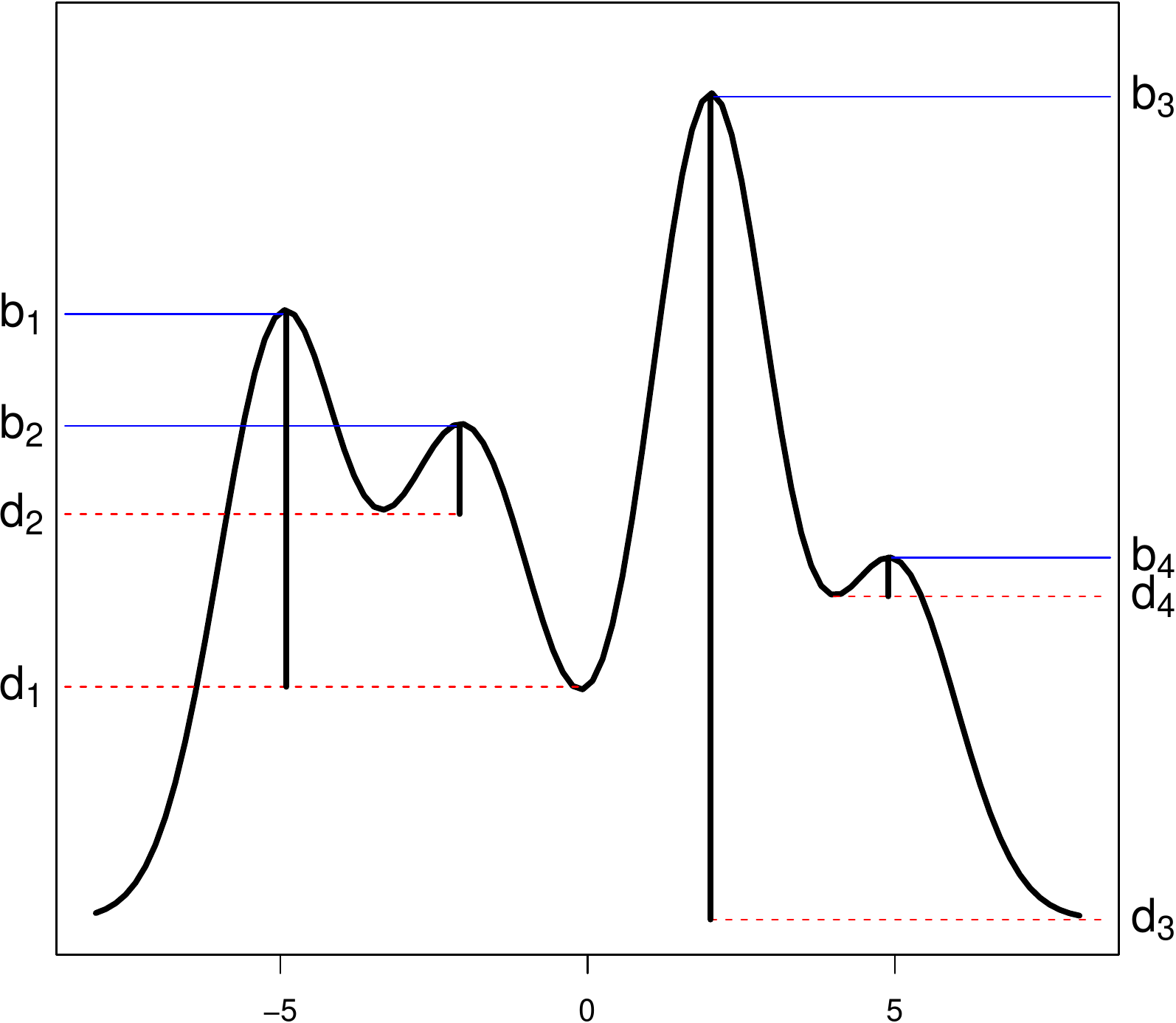}\includegraphics[scale=.5]{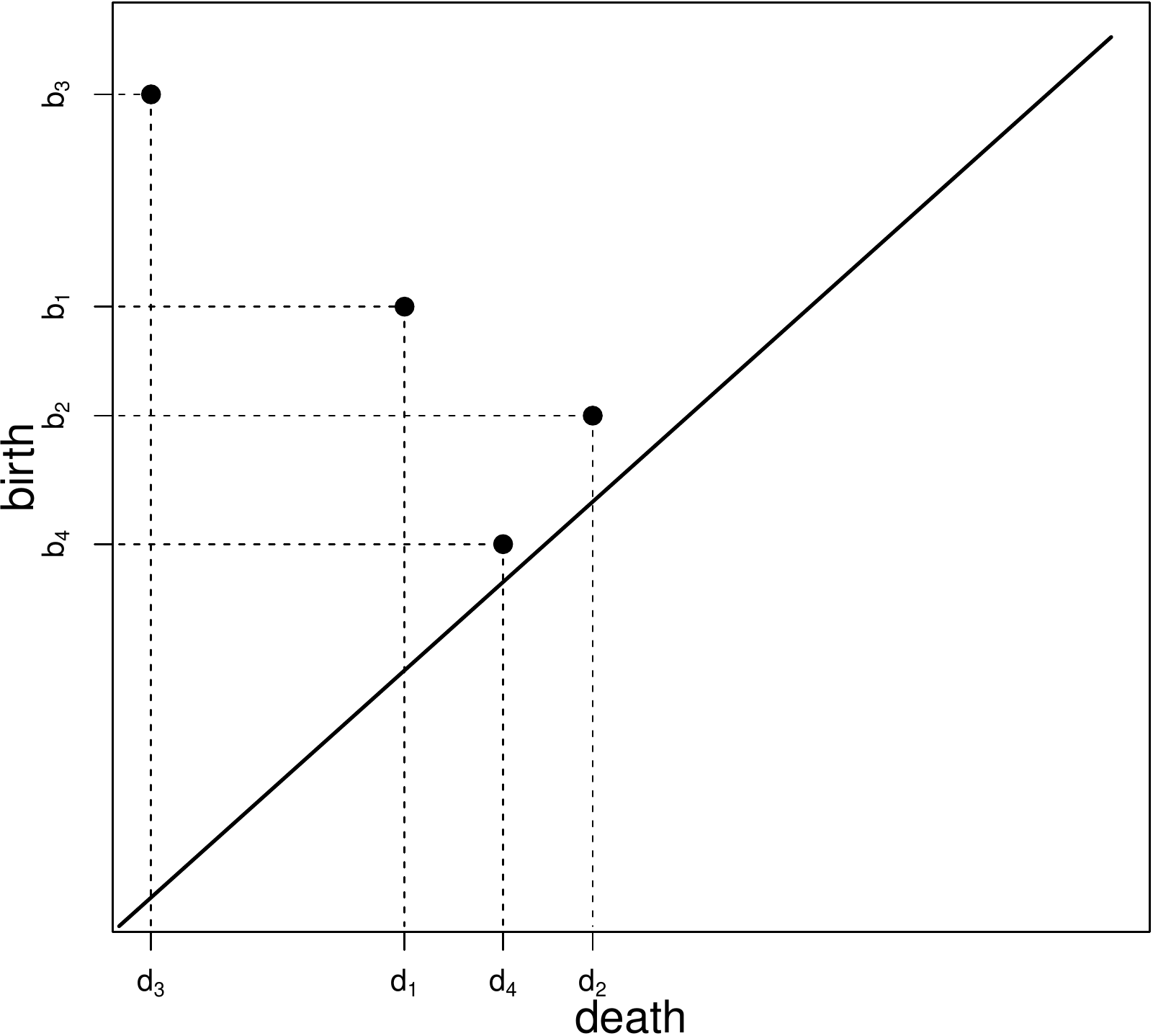}
\end{center}
\caption{\em Starting at the top of the density and moving down,
each mode has a birth time $b$ and a death time $d$.
The persistence diagram (right) plots the points
$(d_1,b_1),\ldots, (d_4,b_4)$.
Modes with a long lifetime are far from the diagonal.}
\label{fig::explain}
\end{figure}

The local test method proposed in this paper
and the persistence method, each have advantages and disadvantages.
The advantages of the persistence approach are that
it does not require data-splitting, it does not require
estimating derivatives
and that, when used in its complete form, it can be used to
find higher-order topological features.
Also, the persistence diagram provides a simple
visualization, independent of the dimension of the data.

The advantages of the local method
are that it provides more shape information about each mode
(via the confidence intervals for the eigenvalues of the Hessian)
and that it is much faster since the bootstrap
is only computed at $k$ points.
In comparison,
the bootstrap for the persistence approach has to be computed
over a fine grid
to approximate 
$||\hat p_h^{*j} - \hat p_h||_\infty$.
Also, the local method never needs to compute the
persistence of the modes which is itself computationally expensive.

In summary, there are advantages and
disadvantages to each approach
and in fact, they both provide
useful information.
The methods are less
similar when one considers
higher-order structure.
The natural extension of local modes
to higher-dimensional objects
corresponds to ridges and hyper-ridges as in
\cite{GPVW}.
In contrast, high-order persistent homology corresponds to
holes and tunnels.
Thus, the two approaches are aimed at different types of structure.

One thing that persistence and local eigenportraits have in common
is that both permit visualization of data regardless
of the dimension of the data.

\section{Examples}
\label{sec::examples}

We start with a few simple examples
to illustrate the method.
Figure \ref{fig::simple-examples}
shows four one-dimensional examples.
In each case $n=200$ and $\alpha = 0.10$.
The first column shows kernel density estimators
and the second column shows confidence intervals for $\gamma_1$
at each mode.

The first two rows
are based on data from a Normal distribution.
Row 1 has a bandwidth of $h=1$
and we find one significant mode.
In row 2 we use a small bandwidth namely
$h=.1$.
In this case there are numerous potential modes
but each is declared to be non-significant
as is evident from
the plot of the confidence intervals for the
$\gamma_{1j}$'s.
This shows an important feature of our procedure:
false modes that occur by using a bandwidth that is too small
are correctly regarded as random fluctuations rather
than being significant modes.
This can be used as a diagnostic to alert us that the bandwidth is too small.
We discuss this point further in Section \ref{section::bandwidth}.
The next two rows show the results
for a mixture of two Normals
($n=200$, $h=1$ and $\alpha=.10$)
and a mixture of three Normals
($n=200$, $h=1.5$ and $\alpha=.10$).
The method correctly finds the appropriate modes.

\begin{figure}
\begin{center}
\includegraphics[scale=.6]{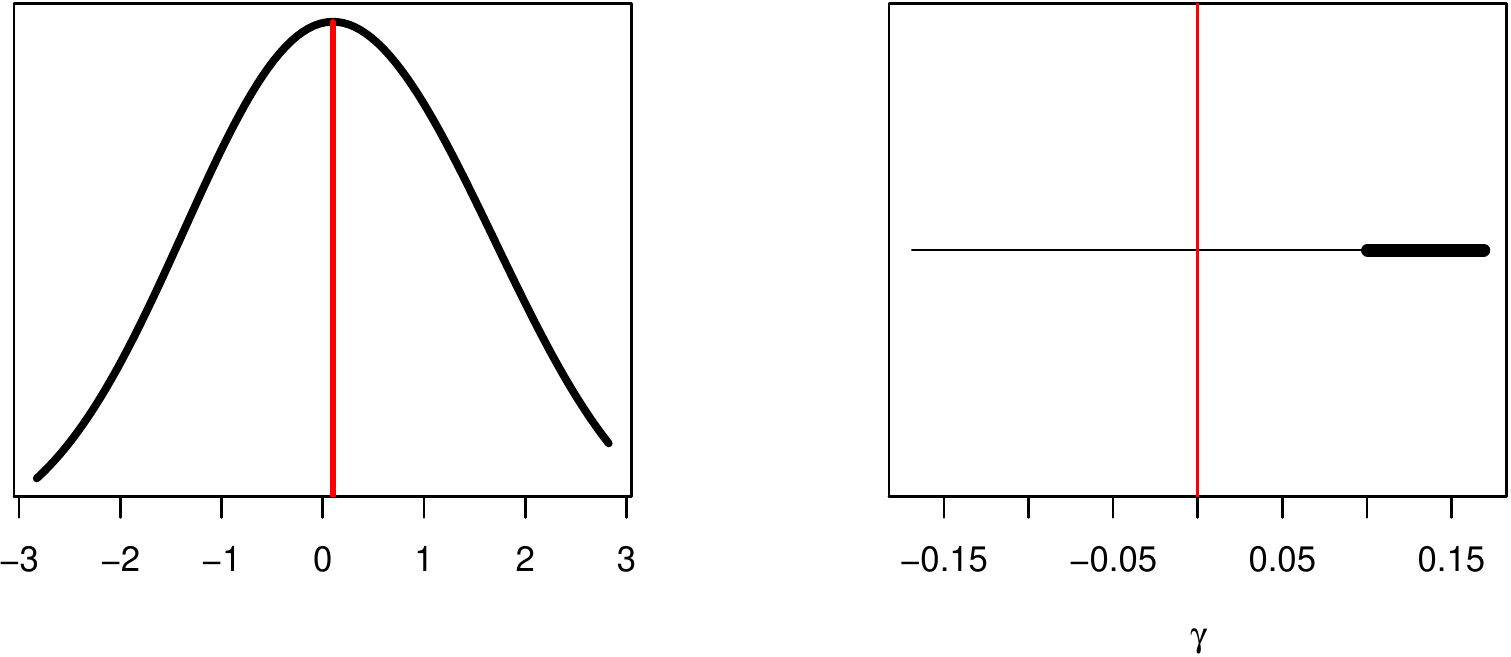}
\includegraphics[scale=.6]{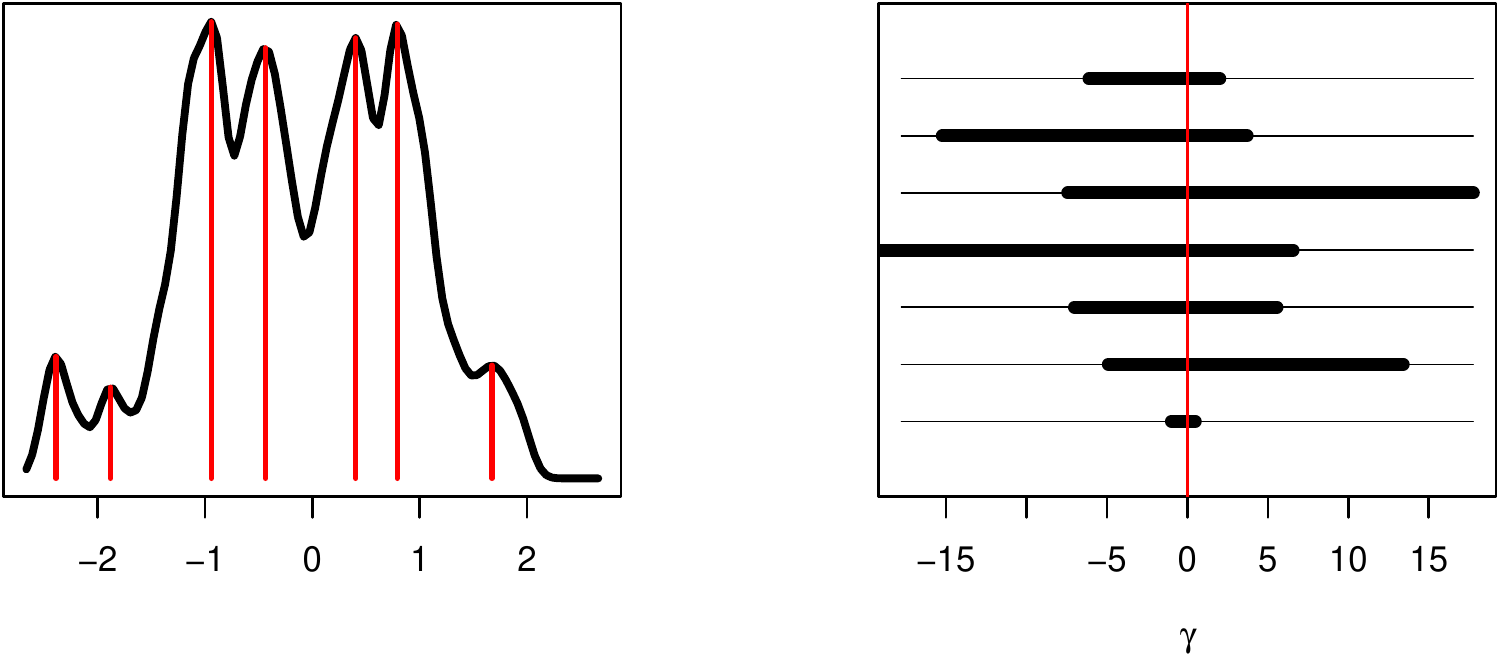}
\includegraphics[scale=.6]{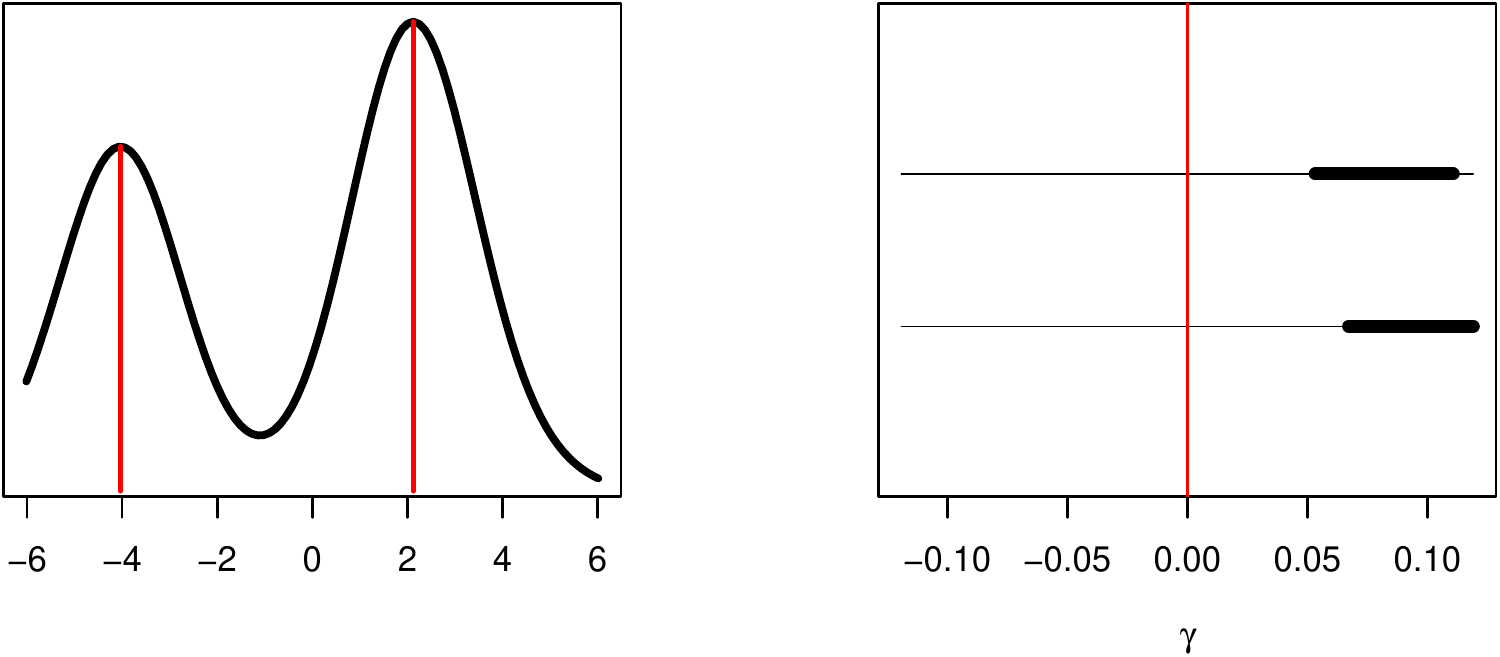}
\includegraphics[scale=.6]{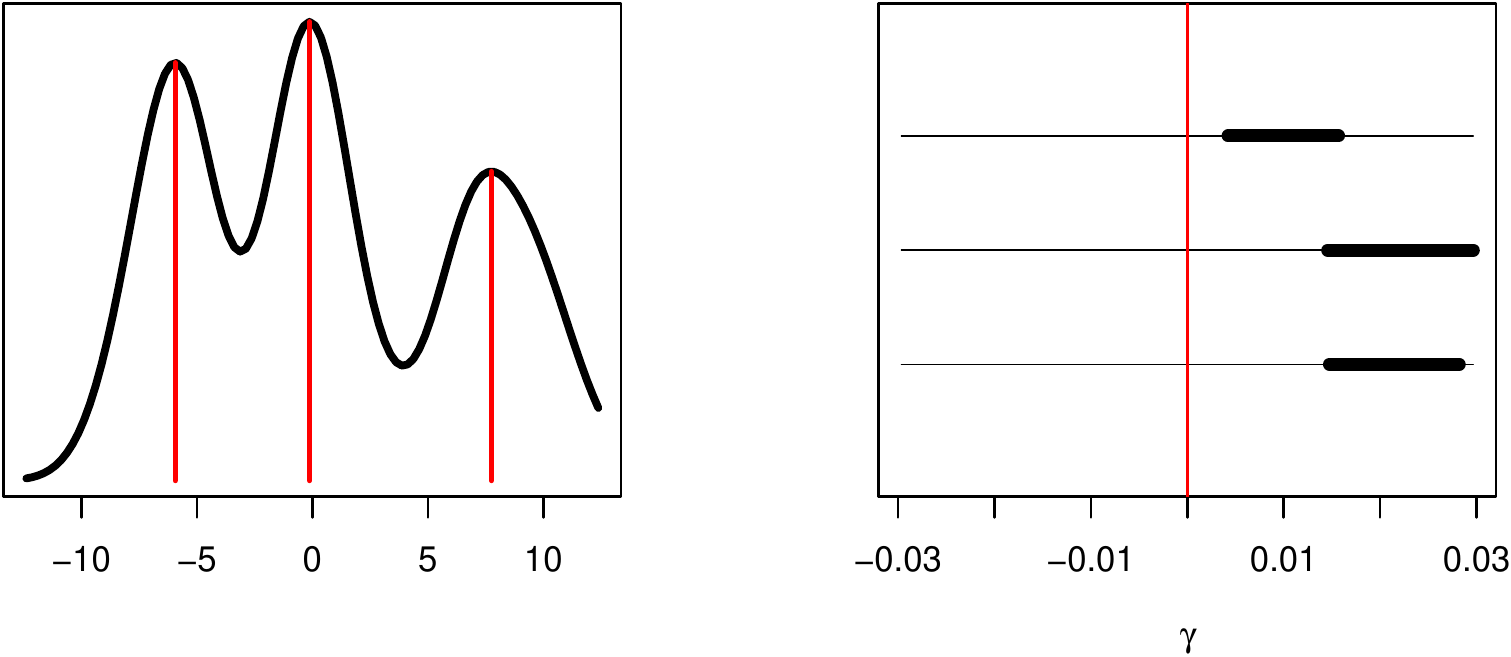}
\end{center}
\caption{\em 
Left plots are kernel density estimators.
Right plots are confidence intervals for the $\gamma_{1j}$'s.
Top row: Normal data, $n=200$, $h=1$, $\alpha = 0.10$.
Second row: Normal data, $n=200$, $h=0.1$, $\alpha = 0.10$.
Third row: a mixture of two Normals, $n=200$, $h=1$ and $\alpha=.10$.
Fourth row: a mixture of three Normals, $n=200$, $h=1$ and $\alpha=.10$.}
\label{fig::simple-examples}
\end{figure}

The confidence intervals in Figure
\ref{fig::10D2M} are for a 10-dimensional dataset with two modes
(a mixture of two Gaussians).
The true density is
$p(x) = \frac{1}{2} \phi(x; \mu_1,\Sigma_1) + \frac{1}{2} \phi(x; \mu_2,\Sigma_2)$
where
$$
\mu_1 = (-5,\ldots, -5),\ \ \ \mu_2 = (5,\ldots, 5),
$$
$\Sigma_1$ is the identity matrix
and
$\Sigma_2$ is diagonal with
diagonal entries
$(1,1,1,1,1,.01,.01,.01,.01,.01)$.
We used $n=10,000$,
$h=1$ and $\alpha = 0.05$.
The procedure located four modes.
The plots in Figure \ref{fig::10D2M}
are done per mode, rather than per eigenvalue.
That is, there is one plot for each of the four modes,
each showing the eigenportrait of all ten eigenvalues.
We see that only two of the modes are significant.
Thus two modes are correctly labeled as not real.
The eigenportraits of the two
significant modes are interesting.
The first eigenportrait shows the spherical nature of the mode.
The second shows that the mode is very non-spherical.
Thus we have an informative way to visualize the 10-dimensional data.

Figure \ref{fig::4Modes-h3} shows a two-dimensional example
with four modes that have different shapes.
The eigenportrait reveals that one mode is highly
non-spherical.

The data in Figure \ref{fig::ring}
show what happens when our
assumptions are violated.
The data have three well-separated modes.
There is also a ring which, technically,
consists of infinitely many, non-separated modes.
Both the persistence method and the local testing method
declare the three spherical modes to be significant.
The modes on the ring are declared non-significant by both methods.
The eigenportrait nicely distinguishes the
difference in shape of the different modes.

\begin{figure}
\begin{center}
\includegraphics[scale=.7]{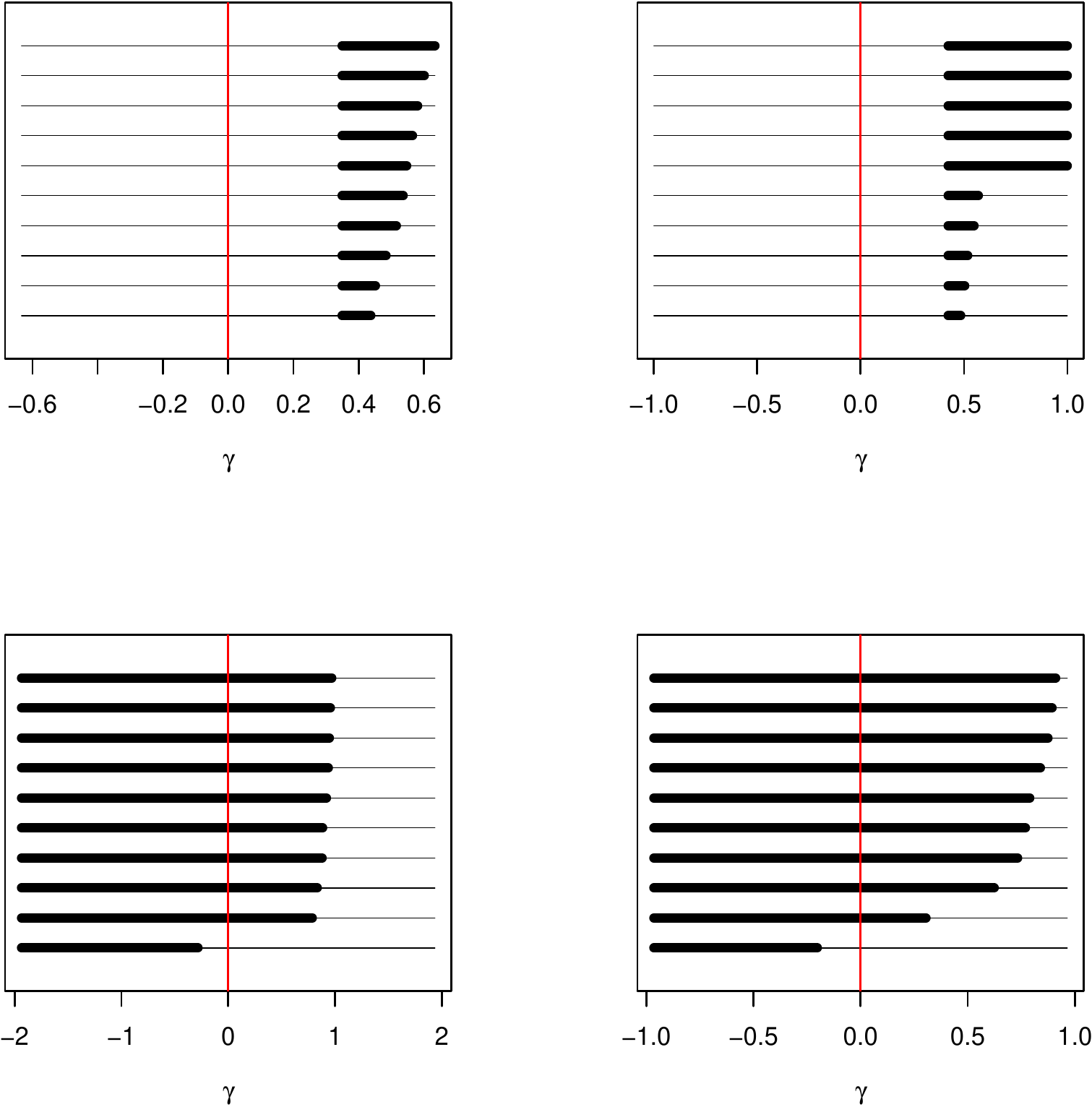}
\end{center}
\caption{\em Eigenportrait of 10-dimensional data. 
Each plot shows confidence intervals for all 10 $\gamma_j$'s.
The top two plots show the two significant modes.
The bottom two plots show the two non-significant modes.
Note that the eigenportraits of the significant modes
show that the two modes have different shapes.}
\label{fig::10D2M}
\end{figure}

\begin{figure}
\includegraphics[scale=1]{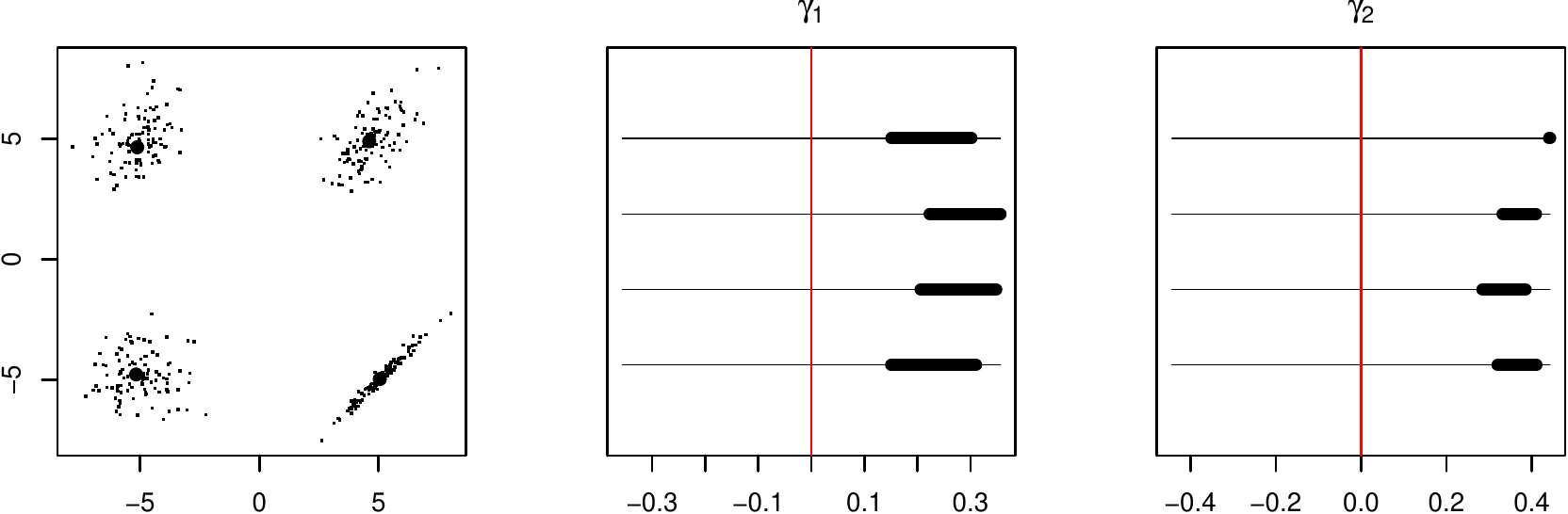}
\caption{\em Left: Data with 4 modes of differing shapes.
Middle: confidence intervals for $\gamma_{1j}$ at each mode.
Right: confidence intervals for $\gamma_{2j}$ at each mode.}
\label{fig::4Modes-h3}
\end{figure}

\begin{figure}
\begin{center}
\includegraphics[scale=1]{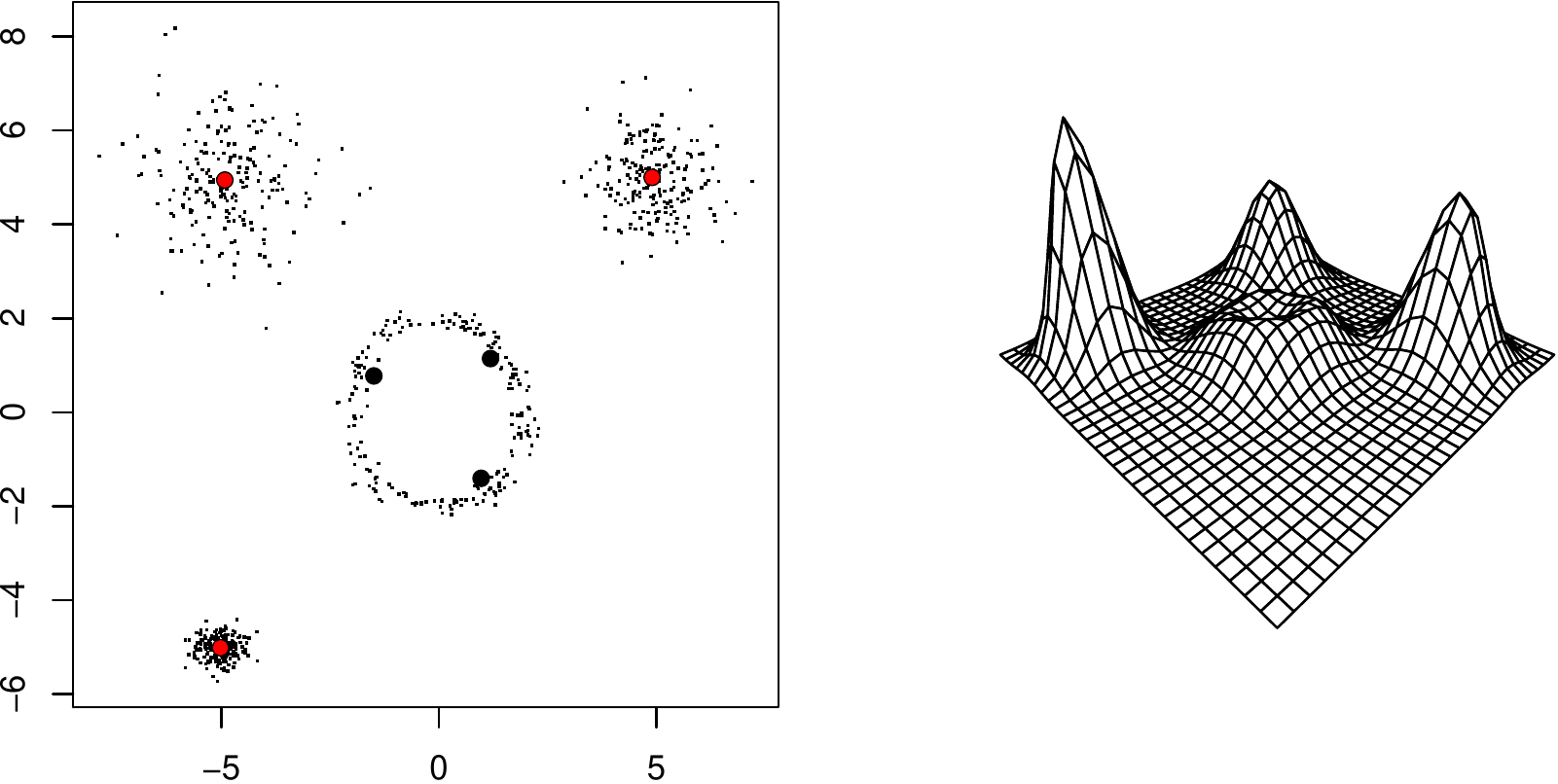}\\
\includegraphics[scale=1]{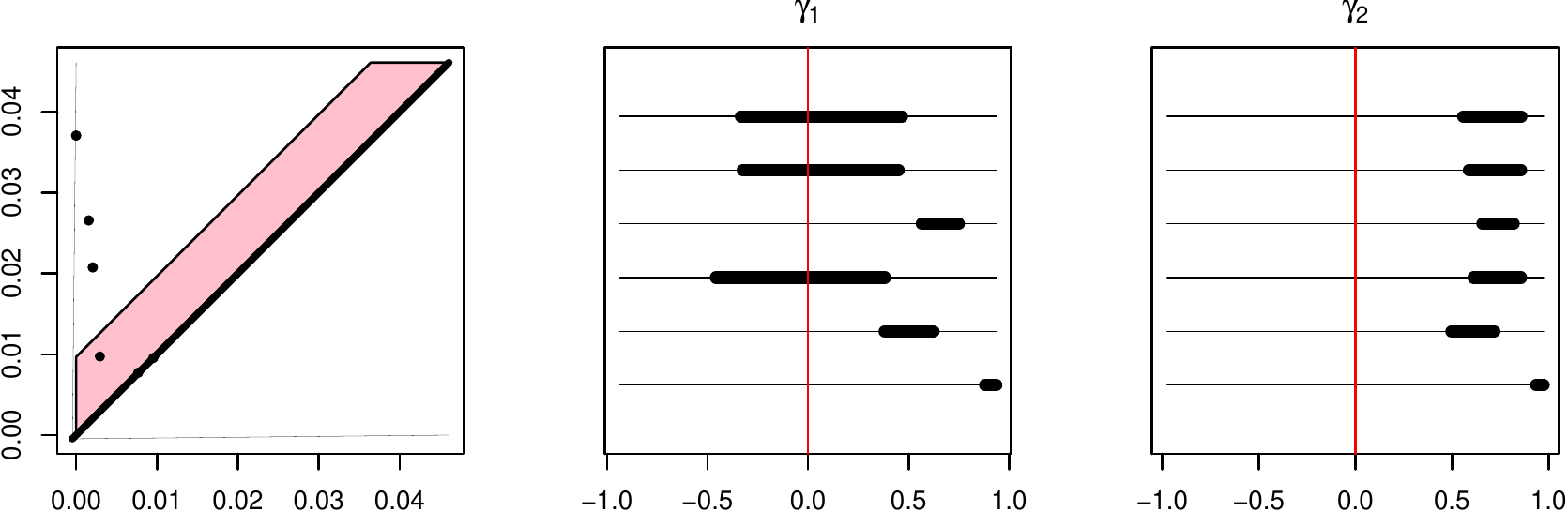}
\end{center}
\caption{\em Top: 3 modes plus a ring.
Bottom left: persistence diagram showing three significant modes
and several non-significant modes.
Points in the filled-in strip are declared to be non-significant.
Middle: confidence intervals for $\gamma_1$ at each mode. (Three significant modes.)
Right: confidence intervals for $\gamma_2$ at each mode.}
\label{fig::ring}
\end{figure}

Now we turn to the
earthquake data 
analyzed in
\cite{duong2008feature}.
The data are the epicenters of 512 earthquakes
before the 1982 eruption of Mt St Helens.
The data, and the three significant modes we found are shown in Figure
\ref{fig::earthquakedata}.
The three variables are
latitude, longitude and
$-\log(- {\rm depth})$.
We use a bandwidth of .3
(which is roughly consistent with the analysis in
\cite{duong2008feature}.)
Figure \ref{fig::earthquake}
shows the persistence diagram
and the eigenportrait.
All the analyses are consistent with three modes
and three different depths.
The modes we located are consistent with the regions of interest
found in 
\cite{duong2008feature}.
The eigenportraits show that $\gamma_1$
(corresponding to depth)
has the most uncertainty (larger confidence intervals).
This makes sense since the latitude
and longitude have small variation.

\begin{figure}
\begin{center}
\includegraphics[scale=.7]{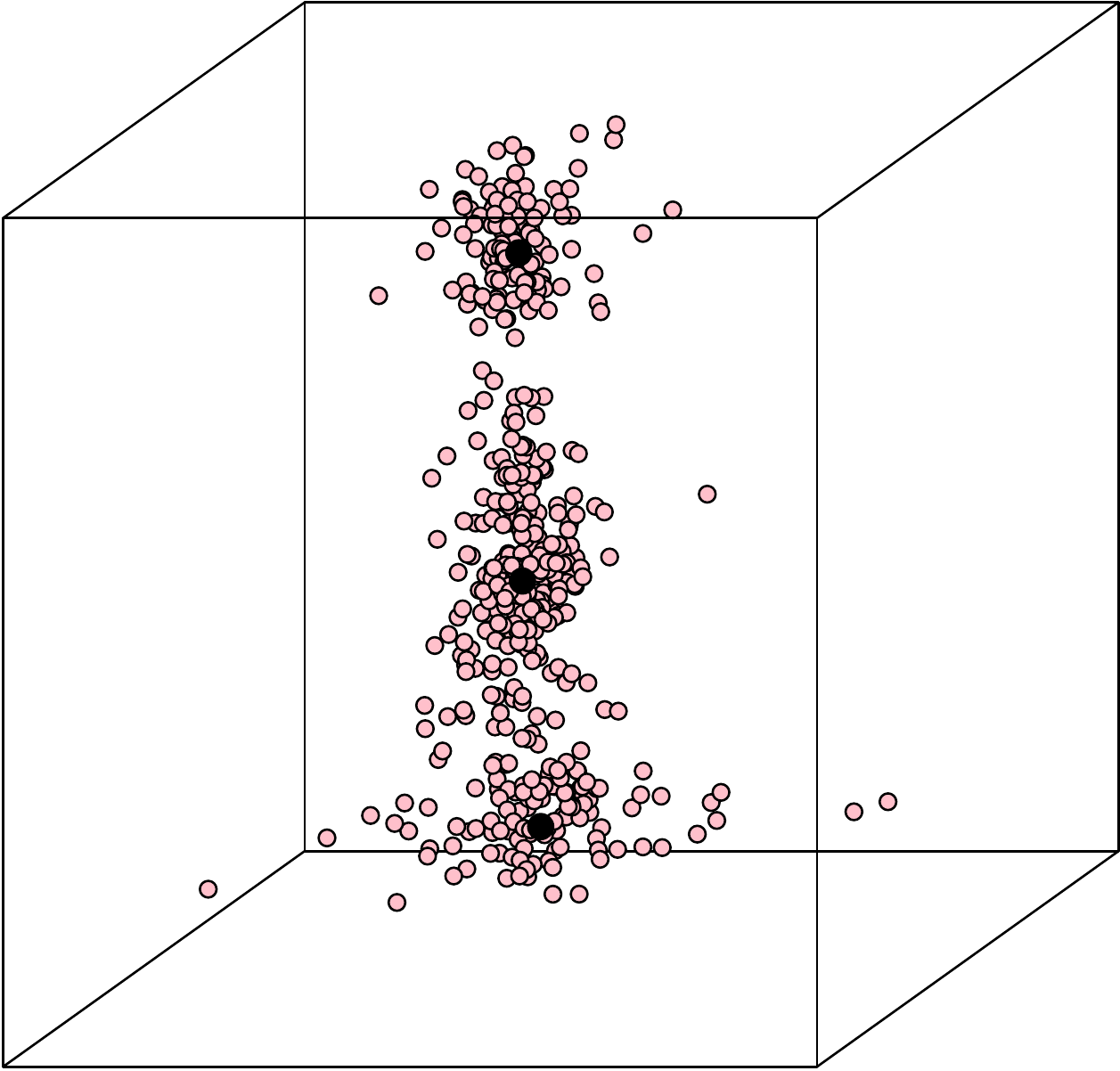}
\end{center}
\caption{\em Scatterplot of the earthquake data.
The three dark points are the estimated modes.}
\label{fig::earthquakedata}
\end{figure}

\begin{figure}
\begin{center}
\includegraphics[scale=.5]{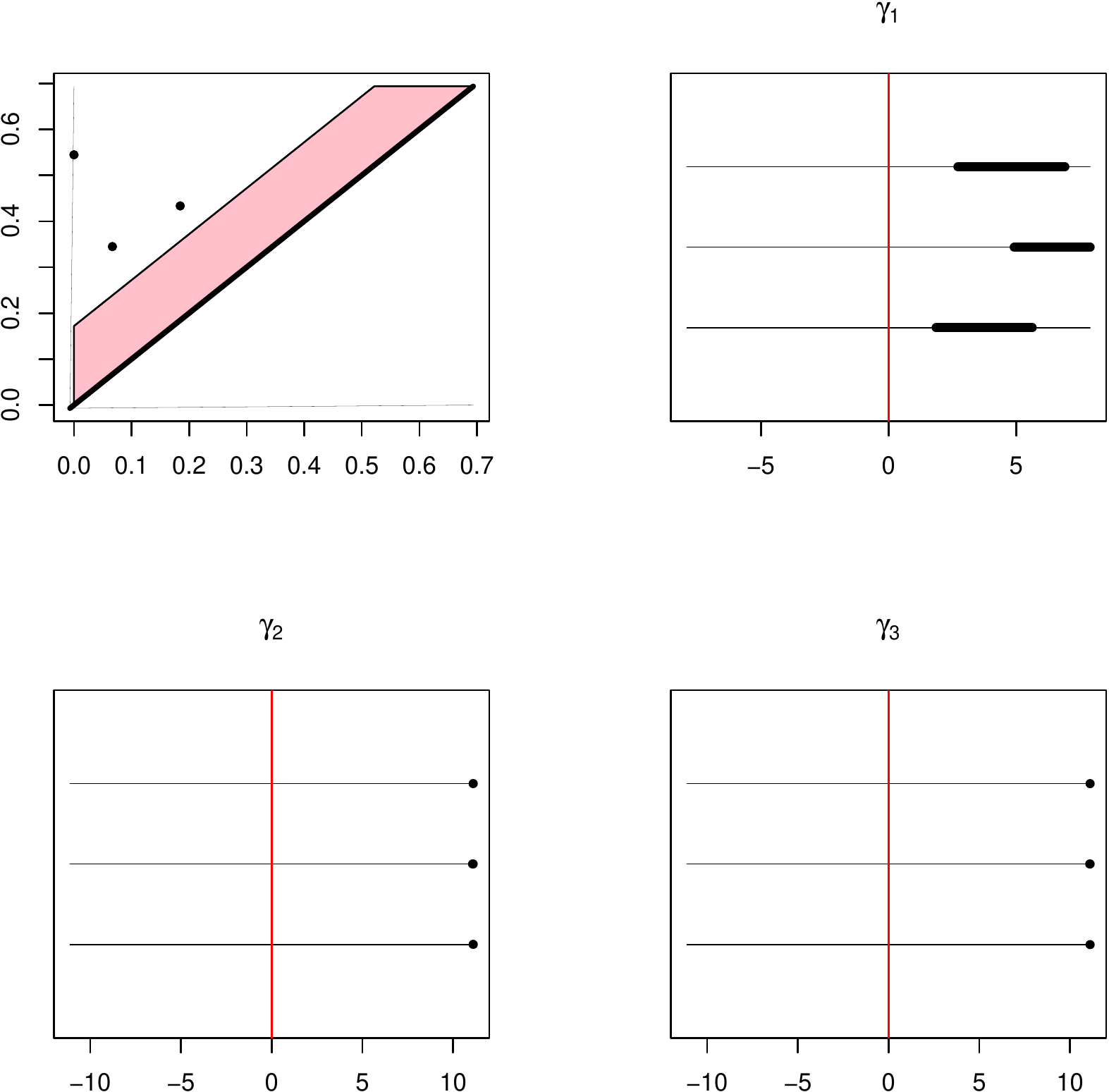}
\end{center}
\caption{\em The earthquake data. Top left: the persistence diagram shows three
significant modes.
Points in the filled-in strip are declared to be non-significant.
Top right: confidence intervals for $\gamma_1$.
Bottom left: confidence intervals for $\gamma_2$.
Bottom right: confidence intervals for $\gamma_3$.}
\label{fig::earthquake}
\end{figure}

\section{A Possible Method For Choosing the Bandwidth}
\label{section::bandwidth}

Bandwidth selection for mode hunting
is a challenging problem.
The first method 
we are aware of is \cite{silverman1981using}
although it has not been used much in practice.
Recent, very promising work has focused on
accurate estimation of derivatives;
see \cite{ChaconAndDuong, chaconMonfort}.
The results in this paper
suggest another approach to selecting the bandwidth
for mode hunting.
The purpose of this section is to
briefly (and heuristically) introduce the idea.

\begin{figure}
\begin{center}
\includegraphics[scale=.85]{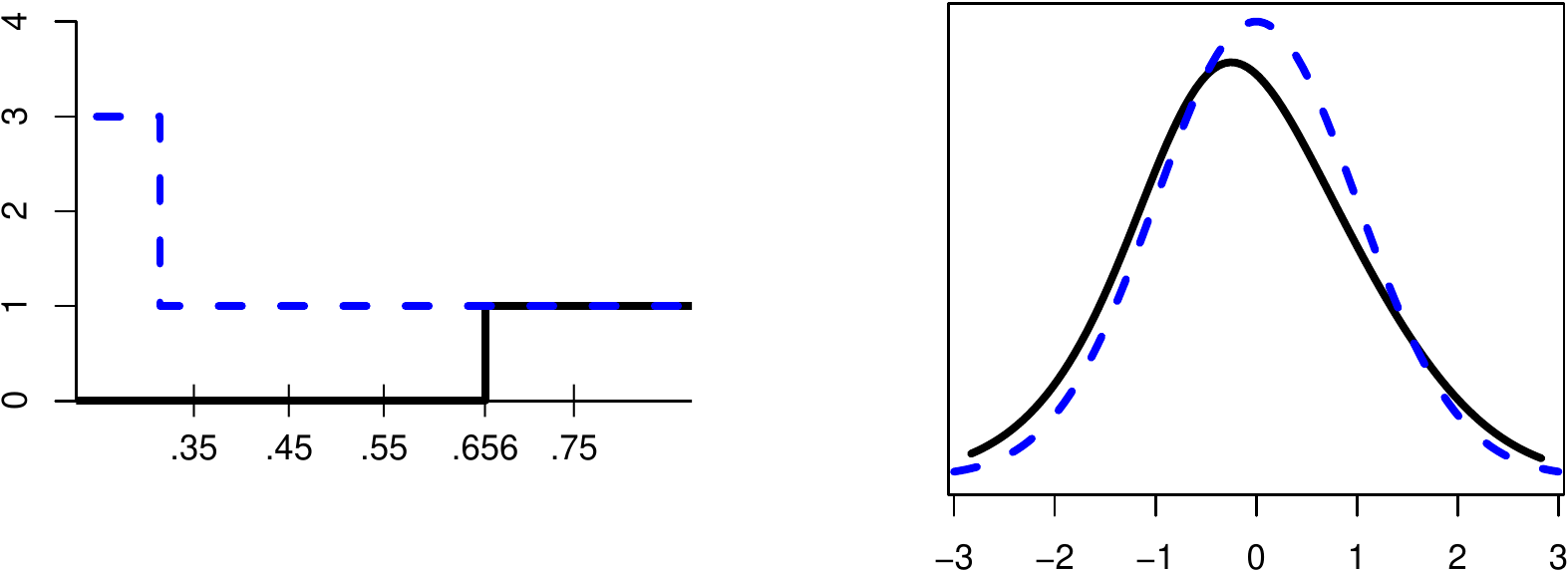}\\
\vspace{.2in}
\includegraphics[scale=.85]{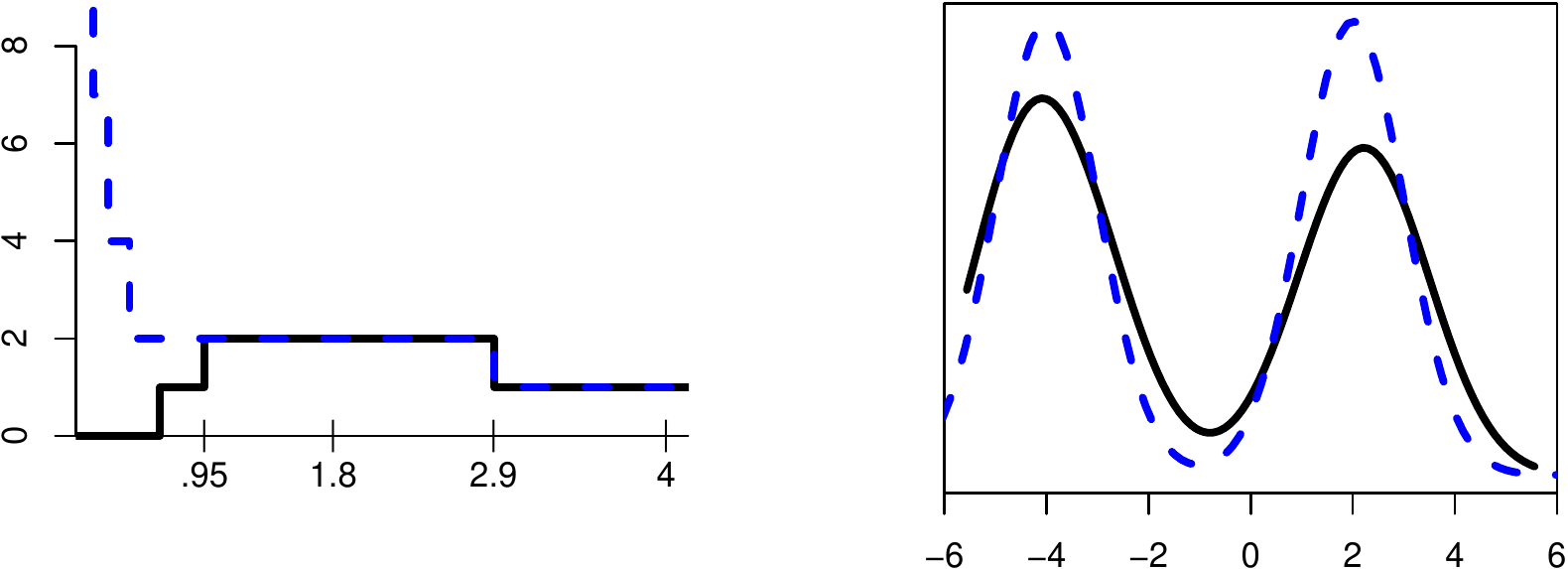}\\
\vspace{.2in}
\includegraphics[scale=.85]{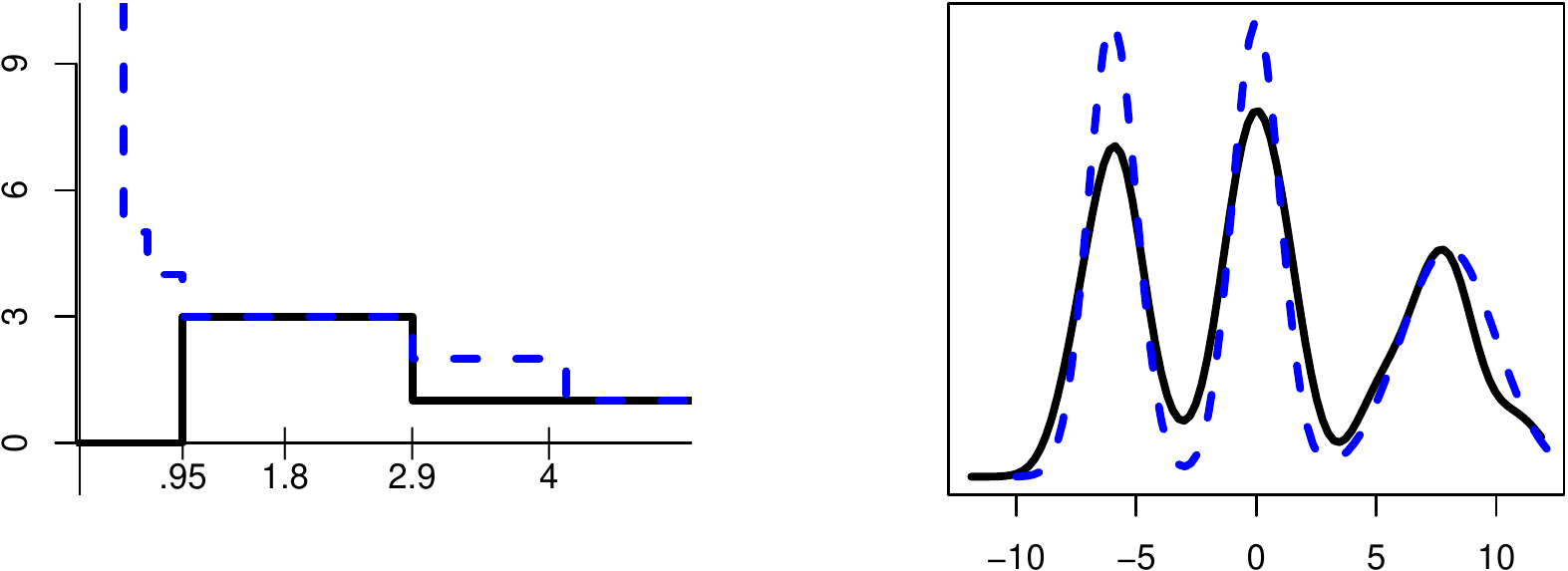}
\end{center}
\caption{\em Left:
Number of modes (dashed line) and number of significant modes (solid line) modes versus bandwidth.
Right: true density (dashed) and estimated density (solid)
using bandwidth $\hat h$ given in equation (\ref{eq::choose-h}).}
\label{fig::BDW}
\end{figure}

We have seen that when the bandwidth $h$ is chosen to be small,
many modes are found but our procedure
identifies these modes
as random fluctuations
in the estimated density.
On the other hand, when $h$ is large,
there will be at most one mode.

Thus, while the number of modes decreases with $h$,
the number of significant modes
is small when $h$ is either too small or too large.
If mode finding is our main goal, 
rather than accurate estimation in the $L_2$ norm,
then this suggests
a new way to choose the bandwidth $h$:
choose $h$ to maximize
the number of significant modes.
More precisely,
let $N(h)$ be the number of significant modes
found by our test, as a function of $h$.
Let $m = \max\{ N(h):\ h> 0\}$
and define
\begin{equation}\label{eq::choose-h}
\hat h = \inf \Bigl\{h:\ N(h) = m\Bigr\}.
\end{equation}

We now examine the result of applying this procedure
in a few examples.
Figure
\ref{fig::BDW}
shows the number of modes
and the number of significant modes $N(h)$ versus bandwidth
for a Normal (top), a mixture of  two Normals (middle) and
a mixture of three Normals (bottom).
In each case, choosing the bandwidth to
maximize the number of significant modes
leads to the correct number of modes.

Now we turn to a very challenging problem:
selecting a bandwidth when the density is singular.
Consider, for example, the distribution,
$$
P=\frac{1}{3} N(-\mu,\sigma) + \frac{1}{3}\delta_0 + 
\frac{1}{3} N(\mu,\sigma)
$$
where $\delta_0$ is a point mass at 0.
Of course, $P$ does not even have a density.
Nonetheless, $p_h$ has three modes
and a kernel density estimator will indeed show three modes
for certain values of $h$.
If we apply the usual cross-validation method,
we will get $\hat h=0$ because there are ties in the data.
This leads to a useless estimator.
What can we hope for in this example?
Estimating well in the $L_2$ sense does not even make sense.
Instead, we at least hope to get a density estimator
with three modes.
Figure \ref{fig::singular} shows an example with
$\mu =10$, $\sigma=1$ and $n=180$.
Here we see that we do indeed get three modes.

\begin{figure}
\begin{center}
\includegraphics[scale=1.0]{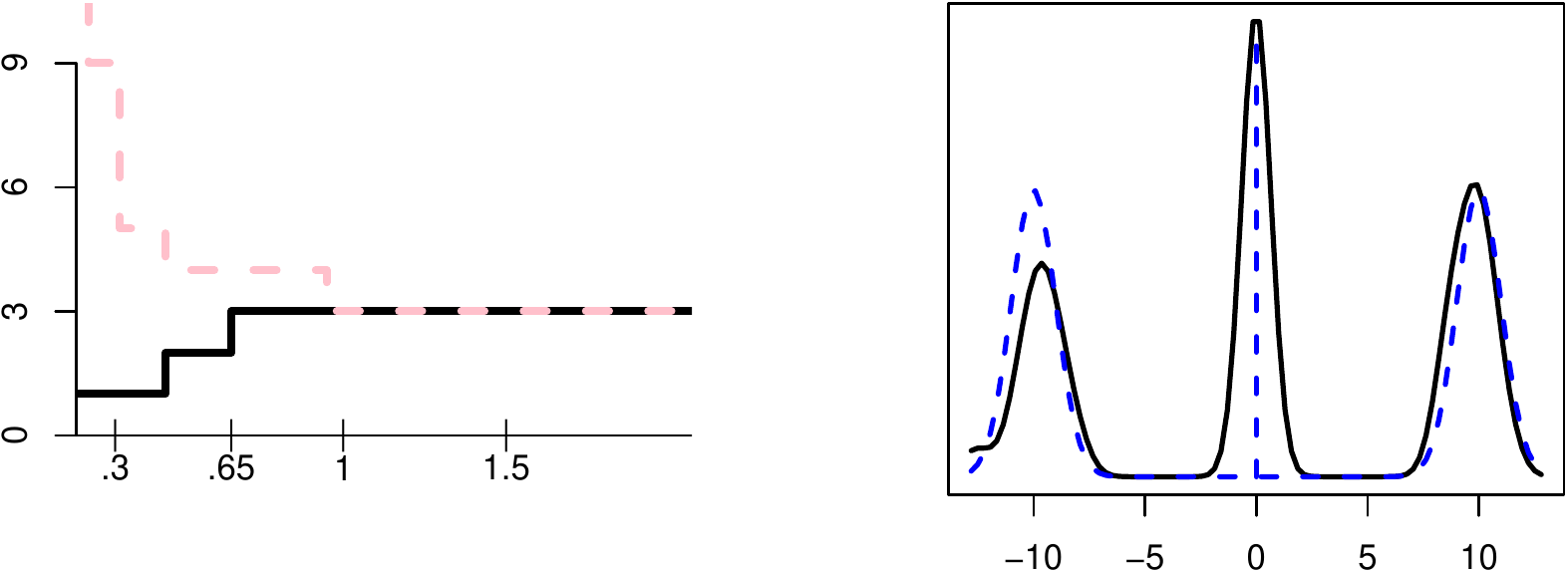}
\end{center}
\caption{\em The true distribution is a mixture of two Normals and a 
point mass at 0.
Left: number of modes (dashed) and number of significant modes (solid) versus bandwidth.
Right: density estimator using $\hat h$.
The estimate is solid.
The true density is dashed.
We use a vertical line to indicate the singular component.
The proposed bandwidth selection method chooses a bandwidth leading
to a smooth density with three modes.}
\label{fig::singular}
\end{figure}

These results are very encouraging but, of course,
a thorough investigation of the idea is needed before
it can be recommended for general use.
To establish theoretical properties
of this method requires theory that is valid when $h\to 0$.
Unfortunately, the usual asymptotic theory requires that
$h^{d+4} n \to \infty$ which precludes small bandwidths.
Hence, a rigorous theory for this method remains an open problem.

\section{Theoretical Properties}
\label{section::theory}

We examine here some theoretical properties of the procedure 
described in Sections \ref{section::testing} and \ref{section::esp}.
Our main goal is to bound the width of the confidence interval for $\gamma_1$
(Theorem \ref{theorem::width}.)
A secondary goal is to show that the modes discovered in stage 1 of the procedure
are good estimators of the true modes.
This fact has been established in various papers
but we could not find an explicit statement of the result
in the multivariate, multi-mode case so we include the details for completeness.
We begin by restating the assumptions.

(A1) The density $p$ is a bounded, continuous
density supported on a compact set
${\cal X}\subset \mathbb{R}^d$.

(A2) The gradient $g$ and Hessian ${\cal H}$ of $p$
are bounded and continuous.
We assume that the Hessian is non-degenerate at every stationary point.

(A3) $p$ has finitely many 
modes $m_1,\ldots, m_{k_0}$
in the interior of ${\cal X}$.

(A4)
Let
\begin{equation}
\Delta = \min_{j\neq k}||m_j - m_k|| \ \ \ \ \ \mbox{and}\ \ \ \ \ 
L=\max_{1\leq j\leq k_0}\lambda_1({\cal H}(m_j)).
\end{equation}
We assume that 
$\Delta >0$ and $L < 0$.

(A5) The kernel $K$ is a symmetric probability density with bounded and continuous first
and second derivatives and bounded second moment.

{\bf Properties of $p_h$.}
Recall that $\hat p_h$
is the kernel density estimator with
bandwidth matrix $H=h^2 I$.
Let
$\hat g_h$
and $\hat {\cal H}_h$
are the gradient and Hessian of $\hat p_h$.
Let
\begin{equation}
p_h(x) = \int K(t) p(x+ th) dt
\end{equation}
be the mean of the kernel density estimator.
Let $g_h$ and ${\cal H}_h$ denote the gradient and Hessian of $p_h$.
For $h>0$ small enough,
$p_h$ inherits all the above properties.
The proofs of the following two lemmas
are standard and are omitted.

\begin{lemma}
Assume (A1)-(A5).
Assume that $h^2 < C \epsilon$ for some $C$.
Then, for all $h>0$ and $\epsilon$ small enough we have:
\begin{enum}
\item $p_h$ is a bounded and continuous density.
\item The gradient $g_h$ and Hessian ${\cal H}_h$ of $p_h$
are bounded and continuous.
\item $p_h$ has finitely many 
modes $m_{1h},\ldots, m_{k_0 h}$
in the interior of ${\cal X}$
where $\max_j ||m_j - m_{jh}|| = O(h^2)$.
\item $\Delta_h >0$ and $L_h < 0$ 
where
$$
\Delta_h = \min_{j\neq k}||m_{jh} - m_{kh}|| \ \ \ \ \ \mbox{and}\ \ \ \ \ 
L_h=\max_{1\leq j\leq k_0}\lambda_1({\cal H}_h(m_{jh})).
$$
\end{enum}
\end{lemma}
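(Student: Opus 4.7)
The plan is to exploit the convolution identity $p_h = K_h \ast p$, where $K_h(u) = h^{-d}K(u/h)$, and then to derive each of the four conclusions by combining uniform Taylor expansions with a quantitative implicit-function argument around each true mode.

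First, since $K$ is a bounded probability density and $p$ is bounded and continuous on a compact support, $p_h$ is bounded and continuous, and integrates to $1$, giving (1). By (A5), $K$ has bounded continuous first and second derivatives, so differentiation under the integral sign in $p_h(x) = \int K_h(x-y)p(y)\,dy$ gives
\begin{equation}
g_h(x) = \int K_h(x-y)\, g(y)\,dy, \qquad \mathcal{H}_h(x) = \int K_h(x-y)\,\mathcal{H}(y)\,dy,
\end{equation}
which are convolutions of $K_h$ with bounded continuous functions (by (A2)), hence bounded and continuous. This gives (2). Moreover, a second-order Taylor expansion of $p$, $g$, $\mathcal{H}$ inside the convolution — valid by (A2) and the bounded second moment of $K$ in (A5) — yields the uniform rates
\begin{equation}
\sup_x\,\|p_h(x)-p(x)\| = O(h^2),\quad \sup_x\,\|g_h(x)-g(x)\| = O(h^2),\quad \sup_x\,\|\mathcal{H}_h(x)-\mathcal{H}(x)\| = O(h^2).
\end{equation}

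For (3), I would first argue that for small $h$ any mode of $p_h$ must lie near some mode $m_j$ of $p$. By (A2) every stationary point of $p$ has non-degenerate Hessian, hence is isolated; combined with (A3) and the compactness of $\mathcal{X}$, the stationary set of $p$ is the finite set $\{m_1,\dots,m_{k_0}\}$ (interior modes) together with possibly finitely many other stationary points. Fix $r>0$ smaller than $\Delta/3$ and small enough that the balls $B(m_j,r)$ lie in the interior. On the compact set $\mathcal{X}\setminus\bigcup_j B(m_j,r/2)$ we have $\inf \|g\| \ge c>0$, so the uniform bound above forces $\inf \|g_h\|\ge c/2$ for small $h$, ruling out stationary points outside these neighborhoods. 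Inside each $B(m_j,r)$, the Hessian $\mathcal{H}(m_j)$ is negative definite by (A4); by continuity and the uniform approximation of $\mathcal{H}_h$, $\mathcal{H}_h$ is negative definite throughout $B(m_j,r)$ for small $h$, so $g_h$ is strictly monotone there and has at most one zero. Applying the implicit/inverse function theorem to $g_h$ at $m_j$, using $g_h(m_j) = g_h(m_j)-g(m_j) = O(h^2)$ and the uniform invertibility of $\mathcal{H}_h$, produces a unique zero $m_{jh}\in B(m_j,r)$ of $g_h$ with
\begin{equation}
\|m_{jh}-m_j\| \le \|\mathcal{H}_h(\xi)^{-1}\|\cdot \|g_h(m_j)\| = O(h^2),
\end{equation}
and negative-definite Hessian $\mathcal{H}_h(m_{jh})$, i.e.\ a mode. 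This gives (3).

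Finally (4) follows immediately from (3): since $\|m_{jh}-m_j\|=O(h^2)$ and $\|m_{kh}-m_k\|=O(h^2)$, the triangle inequality gives $\Delta_h \ge \Delta - O(h^2) > 0$ for small $h$; and since $\mathcal{H}_h(m_{jh}) = \mathcal{H}(m_j) + O(h^2)$ in operator norm, the continuity of eigenvalues under matrix perturbation (Weyl's inequality) yields $\lambda_1(\mathcal{H}_h(m_{jh})) \le L + O(h^2) < 0$, so $L_h < 0$. The main delicate step is ruling out spurious modes of $p_h$ that are not perturbations of true modes of $p$; this is what requires the uniform lower bound on $\|g\|$ away from the stationary set, which in turn rests on (A2)–(A3) and compactness of $\mathcal{X}$.
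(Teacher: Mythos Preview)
The paper omits this proof as ``standard,'' and your approach is indeed the standard one: convolution smoothing, uniform approximation of $p,g,\mathcal{H}$ by $p_h,g_h,\mathcal{H}_h$, and an implicit-function argument around each mode. Two points deserve attention.

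First, there is a small gap in your argument ruling out spurious modes. You acknowledge that $p$ may have non-mode stationary points (saddles, minima), yet you then assert $\inf\|g\|\ge c>0$ on $\mathcal{X}\setminus\bigcup_j B(m_j,r/2)$ with $j$ ranging only over modes; this fails at any saddle. The fix is to remove small balls around \emph{all} stationary points of $p$ (finitely many, by nondegeneracy and compactness), get the lower bound on $\|g\|$ on the complement as before, and then note that near a non-mode stationary point $x_0$ the Hessian $\mathcal{H}(x_0)$ has a nonnegative eigenvalue, hence so does $\mathcal{H}_h$ nearby by Weyl's inequality, so any stationary point of $p_h$ there is not a mode.

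Second, the uniform rates $\|g_h-g\|_\infty=O(h^2)$ and $\|\mathcal{H}_h-\mathcal{H}\|_\infty=O(h^2)$ require second-order Taylor expansions of $g$ and $\mathcal{H}$, i.e.\ bounded third and fourth derivatives of $p$, which (A2) does not state. Under (A2) alone one gets only $\|g_h-g\|_\infty=o(h)$ (symmetry of $K$ kills the first-order term) and $\|\mathcal{H}_h-\mathcal{H}\|_\infty=o(1)$. These weaker rates still suffice for (1), (2), (4) and for existence and uniqueness of $m_{jh}$; the specific $O(h^2)$ rate in (3) genuinely needs at least a Lipschitz Hessian. This is a looseness in the paper's stated assumptions rather than in your reasoning.
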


The conditions also guarantee that $p$ and $p_h$
are locally quadratic around their modes.
Let $B(x,\epsilon)$ denote a ball of radius $\epsilon$
centered at $x$.

\begin{lemma}\label{lemma::quadratic}
Assume that $h^2 < C \epsilon$ for some $C$.
Let $B_j = B(m_j,\epsilon)$.
When $\epsilon$ and $h$ are small enough,
$m_{jh}\in B_j$ for each $j$.
Moreover, 
there exists $\delta>0$ and $c>0$ such that
the following are true:
$$
\max_j \sup_{x\in B_j}\lambda_1({\cal H}(x))    \leq -\delta
\ \ \ \mbox{and}\ \ \ 
\max_j \sup_{x\in B_j}\lambda_1({\cal H}_h(x))  \leq -\delta
$$
and, for all $j$ and all $x\in B_j$,
$$
p(m_j) - p(x) \geq c ||x-m_j||^2\ \ \mbox{and}\ \ \ 
p_h(m_{jh}) - p_h(x) \geq c ||x-m_{jh}||^2.
$$
\end{lemma}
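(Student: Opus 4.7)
The plan is to deduce all four assertions from continuity of the Hessian, the preceding lemma's bound $\|m_j - m_{jh}\| = O(h^2)$, and a second-order Taylor expansion around each mode. Containment is immediate: by that bound, $\|m_j - m_{jh}\| \leq C' h^2 \leq C'C\epsilon$, so choosing $\epsilon$ and $h$ small enough forces $m_{jh} \in B_j$ for every $j$.

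For the uniform Hessian bound, I would start from (A4), which gives $\lambda_1(\mathcal{H}(m_j)) \leq L < 0$ for every $j$. Set $\delta_0 = |L|/2$. Since the top eigenvalue of a symmetric matrix is continuous in its entries and $\mathcal{H}$ is continuous in $x$ by (A2), for $\epsilon$ sufficiently small we have $\sup_{x \in B_j}\lambda_1(\mathcal{H}(x)) \leq -\delta_0$ for each $j$. For the smoothed density, differentiating under the integral gives $\mathcal{H}_h(x) = \int K(t)\,\mathcal{H}(x + th)\,dt$, and uniform continuity of $\mathcal{H}$ on a compact neighborhood of $\bigcup_j B_j$ (which lies in the interior of $\mathcal{X}$) together with $\int K = 1$ yields $\mathcal{H}_h \to \mathcal{H}$ uniformly there as $h \to 0$. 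Hence, for $h$ small enough, $\sup_{x \in B_j}\lambda_1(\mathcal{H}_h(x)) \leq -\delta_0/2$. Taking $\delta = \delta_0/2$ yields both eigenvalue bounds.

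For the quadratic lower bounds, I apply Taylor's theorem around each mode: since $\nabla p(m_j) = 0$, for any $x \in B_j$,
\begin{equation*}
p(x) - p(m_j) \;=\; \tfrac{1}{2}(x - m_j)^T \mathcal{H}(\xi)(x - m_j)
\end{equation*}
for some $\xi$ on the segment from $m_j$ to $x$, which lies in the convex ball $B_j$. The Hessian eigenvalue bound then gives $p(m_j) - p(x) \geq (\delta/2)\|x - m_j\|^2$, so $c = \delta/2$ works. The same argument applied to $p_h$ at $m_{jh}$, using $\nabla p_h(m_{jh}) = 0$, yields the second inequality, provided the bound on $\lambda_1(\mathcal{H}_h)$ holds on a ball containing the segments from $m_{jh}$ to each $x \in B_j$. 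Since $\|m_j - m_{jh}\| = O(h^2)$, it suffices to prove the Hessian bound on the slightly enlarged ball $B(m_j, 2\epsilon)$ at the outset and then shrink; this is the only bit of bookkeeping.

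The main obstacle, such as it is, is ensuring that $\mathcal{H}_h$ stays close enough to $\mathcal{H}$ on a neighborhood of each mode to preserve strict negativity of the top eigenvalue \emph{uniformly} over $B_j$. This is where (A5) and (A2)--(A3) combine: (A5) legitimizes the convolution representation of $\mathcal{H}_h$, while (A2)--(A3) supply uniform continuity of $\mathcal{H}$ on a compact neighborhood of modes located in the interior of $\mathcal{X}$. Everything else reduces to continuity of eigenvalues and the second-order Taylor expansion at a critical point.
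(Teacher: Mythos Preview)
Your argument is correct; the paper itself omits the proof of this lemma as ``standard,'' so there is nothing to compare against beyond confirming that your route is the natural one. One small point: the bookkeeping worry at the end is unnecessary, since $B_j$ is a Euclidean ball and hence convex, so once you know $m_{jh}\in B_j$ and $x\in B_j$, the entire segment from $m_{jh}$ to $x$ already lies in $B_j$ and the Hessian bound there suffices without passing to $B(m_j,2\epsilon)$.
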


{\bf Properties of $\hat p_h$.}
Here we record some useful facts about $\hat p_h$.
We have that
\begin{align}
\sup_{x\in{\cal X}} ||\hat p_h(x) - p_h(x)|| &\leq C \sqrt{\frac{\log n}{nh^d}}\nonumber \\
\sup_{x\in{\cal X}} ||\hat g_h(x) - g_h(x)|| &\leq C \sqrt{\frac{\log n}{nh^{d+2}}} \label{eq::bounds}\\
\sup_{x\in{\cal X}} ||\hat {\cal H}_h(x) - {\cal H}_h(x)|| &\leq C \sqrt{\frac{\log n}{nh^{d+4}}}\nonumber
\end{align}
almost surely, for all large $n$.
The first bound is proved in
\cite{gine2002rates}
and the bounds on $\hat g_h$ and $\hat{\cal H}_h$ follow similarly.
From Theorems 1 and 3 of
Duong, Cowling, Koch and Wand (2008),
we have that
\begin{equation}
\sqrt{n h^{d+2}}(\hat g_h(x) - g_h(x))\rightsquigarrow N(0,\Sigma_1)
\end{equation}
where
$\Sigma_1 = p_h(x) \int \nabla K(x) \nabla K(x)^T dx$
and
\begin{equation}
\sqrt{n h^{d+4}} {\rm vech} (\hat{\cal H}_h(x) - {\cal H}_h(x)) \rightsquigarrow N(0,\Sigma_2)
\end{equation}
where
$\Sigma_2 = p_h(x) \int ({\rm vech}\nabla^{(2)} K(x) )({\rm vech}\nabla^{(2)} K(x) )^T dx.$

{\bf Properties of the Estimated Modes.}
Let 
\begin{align*}
{\cal M}     &= \mbox{modes of} \ p\\
{\cal M}_h   &= \mbox{modes of} \ p_h\\
\hat{\cal M} &= \mbox{modes of} \ \hat{p}_{X,h}\\
{\cal M}^\dagger &= \mbox{modes in\ }\hat{\cal M}\ \mbox{that pass the hypothesis test in Stage 2}.
\end{align*}

\begin{lemma}\label{lemma::mode-hat}
Let $\epsilon>0$ and $h>0$ be sufficiently small with
$h^2 < C \epsilon$ for some $C>0$.
Let $B_j = B(m_{j},\epsilon)$.
Let $B_0 = {\cal X} - \bigcup_{j=1}^k B_j$.
Then, as $n\to \infty$:
\begin{enum}
\item $\mathbb{P}(B_j \bigcap \hat{\cal M}\neq \emptyset\ {\rm for\ all\ }j)\to 1$.
Thus, $\hat p_{X,h}$ has at least one mode in each $B_j$.
\item With probability tending to 1,
$\hat p_{X,h}$ has exactly one mode $\hat m_{jh}$ in each $B_j$.
\item Let $x_n$ be any maximizer of $\hat p_{X,h}$ in $B_j$.
Then
\begin{equation}
||x_n - m_{jh}|| = O_P\left(  (nh^d)^{-1/4}\right)
\end{equation}
and
\begin{equation}
||x_n - m_{j}|| = O_P\left(  (nh^d)^{-1/4}\right) + O(h^2).
\end{equation}
\end{enum}
\end{lemma}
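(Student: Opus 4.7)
The plan is to establish the three conclusions in order, relying only on the uniform-convergence bounds \eqref{eq::bounds} for $\hat p_{X,h}$ and $\hat{\cal H}_h$ together with the local quadratic structure furnished by Lemma~\ref{lemma::quadratic}. Throughout, shrink $\epsilon$ if necessary so that, by the first lemma of this section, $m_{jh}\in B(m_j,\epsilon/2)$ for every $j$ once $h$ is small.

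For (1), I would first use Lemma~\ref{lemma::quadratic} to obtain a strictly positive gap between $p_h$ at the mode and $p_h$ on the boundary of $B_j$, namely
\[
p_h(m_{jh}) - p_h(x) \;\geq\; c\,\epsilon^2/4\qquad\text{for all } x\in\partial B_j.
\]
Since $\sup_{x\in{\cal X}} |\hat p_{X,h}(x)-p_h(x)| = o_P(1)$ by \eqref{eq::bounds}, on an event of probability tending to one we have $\hat p_{X,h}(m_{jh}) > \sup_{x\in\partial B_j}\hat p_{X,h}(x)$. Hence the continuous function $\hat p_{X,h}$ attains its maximum over $\overline{B_j}$ at an interior point, which is a critical point; by the negative-definiteness established in (2), this critical point is a strict local maximum, i.e. a mode.

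For (2), I would combine the uniform Hessian bound in \eqref{eq::bounds}, which gives $\sup_{x\in B_j}\|\hat{\cal H}_h(x)-{\cal H}_h(x)\|\to 0$ almost surely, with $\sup_{x\in B_j}\lambda_1({\cal H}_h(x))\leq -\delta$ from Lemma~\ref{lemma::quadratic}, to conclude that $\lambda_1(\hat{\cal H}_h(x))\leq -\delta/2$ for every $x\in B_j$ on an event of probability tending to one. Then $\hat p_{X,h}$ is strictly concave on the convex set $B_j$, so it has at most one critical point there; combined with (1), exactly one mode $\hat m_{jh}\in B_j$.

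For (3), I would use the standard sup-comparison trick. Since $x_n$ maximizes $\hat p_{X,h}$ on $B_j$, $\hat p_{X,h}(x_n)\geq \hat p_{X,h}(m_{jh})$; rearranging and bounding twice by the uniform norm yields
\[
p_h(m_{jh}) - p_h(x_n) \;\leq\; 2\sup_{x\in{\cal X}}|\hat p_{X,h}(x) - p_h(x)| \;=\; O_P\!\left(\sqrt{\tfrac{\log n}{nh^d}}\right)
\]
by \eqref{eq::bounds}. The local quadratic lower bound $p_h(m_{jh})-p_h(x_n)\geq c\|x_n-m_{jh}\|^2$ from Lemma~\ref{lemma::quadratic} then gives $\|x_n-m_{jh}\|=O_P((nh^d)^{-1/4})$, absorbing the logarithmic factor into $O_P$. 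The second bound follows by the triangle inequality together with $\|m_{jh}-m_j\|=O(h^2)$ from the first lemma of the section. The main obstacle is mild: the stated rate is missing the logarithmic factor coming from the sup-norm bound, and one would have to invoke the modulus of continuity of the centered empirical process at scale $(nh^d)^{-1/4}$ around $m_{jh}$ (e.g.\ via chaining) to sharpen this strictly, but the comparison inequality combined with the quadratic lower bound already captures the essential content.
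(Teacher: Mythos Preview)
Your proposal is correct and matches the paper's approach in all three parts: an interior-maximizer argument for (1), uniform negativity of $\lambda_1(\hat{\cal H}_h)$ on $B_j$ for (2) (the paper writes out the gradient mean-value identity $\hat g_{X,h}(y)-\hat g_{X,h}(x)=(y-x)^T\!\int_0^1\hat{\cal H}_{X,h}(x+u(y-x))\,du$ rather than saying ``strictly concave on a convex set,'' but it is the same argument), and the comparison inequality $\hat p_{X,h}(x_n)\geq\hat p_{X,h}(m_{jh})$ combined with the quadratic lower bound for (3). On the log-factor obstacle you flag: you cannot literally absorb $(\log n)^{1/4}$ into $O_P$, and the paper does not do so either---it simply asserts $\sup_x\sqrt{nh^d}\,|\hat p_{X,h}(x)-p_h(x)|=O_P(1)$ without proof, so the modulus-of-continuity refinement you mention is exactly what is implicitly being invoked to obtain the stated rate.
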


\begin{proof}
(1) 
Since $\hat p_{X,h}$ is a bounded continuous function,
it has a maximizer over $B_j$.
We claim that the maximizer must be in the interior of $B_j$.
Write
$B_j = A_0\cup A_1 \cup A_2$ where
$A_0 = \{x:\ ||x-m_j|| \leq \epsilon/3\}$,
$A_1 = \{x:\ \epsilon/3 < ||x-m_j|| \leq 2\epsilon/3\}$ and
$A_2 = \{x:\ 2\epsilon/3 < ||x-m_j|| \leq \epsilon\}$.
For $C$ large enough,
$m_{jh}\in A_0$.
Also, from the properties of $p_h$,
$$
\inf_{x\in A_0} p_h(x) > \sup_{x\in A_2} p_h(x).
$$
With probability tending to 1,
$$
\inf_{x\in A_0}\hat p_{X,h}(x) \geq
\inf_{x\in A_0} p_h(x) - C \sqrt{\frac{\log n}{n h^d}}>
\sup_{x\in A_2} p_h(x) - C \sqrt{\frac{\log n}{n h^d}}
\geq \sup_{x\in A_2} \hat p_{X,h}(x) + C \sqrt{\frac{\log n}{n h^d}}.
$$
So,
with probability tending to 1,
$$
\inf_{x\in A_0}\hat p_{X,h}(x) > \sup_{x\in A_2}\hat p_{X,h}(x).
$$
This implies that any maximizer $x$ of $\hat p_{X,h}$ over $B_j$
is in the interior of $B_j$
and hence
$\hat g_{X,h}(x) = (0,\ldots, 0)^T$.
Furthermore,
$$
\lambda_1(\hat{\cal H}_{X,h}(x)) \leq \lambda_1({\cal H}_{X,h}(x)) +o_P(1) \leq -\delta + o_P(1).
$$
So, with probability tending to 1,
$\hat p_{X,h}$ has a maximizer $x$ in the interior of $B_j$ with 0 gradient
and negative Hessian eigenvalues, i.e. it is a mode.

(2) Suppose $\hat p_{X,h}$
has two modes $x$ and $y$ in $B_j$.
So
$\hat g_{X,h}(x)=\hat g_{X,h}(y)=(0,\ldots,0)^T$.
Recall the exact Taylor expansion
for a vector valued function $f$ is
$f(a+t) - f(a) = t^T \int_0^1 Df(a + ut) du.$
So
$$
(0,\ldots, 0)^T =
\hat g_{X,h}(y) - \hat g_{X,h}(x) = 
(y-x)^T \int_0^1 \hat{\cal H}_{X,h}(x + u(y-x)) du
$$
and hence, multiplying on the right by by $y-x$,
\begin{align*}
0 &=  \int_0^1 (y-x)^T \hat{\cal H}_{X,h}(x + u(y-x)) (y-x) du \leq
||y-x|| \sup_u \lambda_1(\hat{\cal H}_{X,h}( x + u(y-x)))\\
& \leq
||y-x|| \sup_u [\lambda_1({\cal H}_{X,h}( x + u(y-x))) + o_P(1)]\\
& \leq
||y-x||  \Bigl[-\delta + o_P(1)\Bigr] < - \frac{\delta ||y-x||}{2}
\end{align*}
with probability tending to one.
Hence, $x=y$.

(3) This proof uses a strategy similar to that in Donoho and Liu (1991, Theorem 5.5).
Let $x$ be any maximizer of $\hat p_{X,h}$ over $B_j$.
Then
$\hat p_{X,h}(x) \geq \hat p_{X,h}(m_{jh})$ 
(where $m_{jh}\in {\cal M}_h$)
and hence
$$
[\hat p_{X,h}(x) - p_h(x)] - [\hat p_{X,h}(m_{jh}) - p_h(m_{jh})] \geq
p_h(m_{jh}) - p_h(x) \geq c||m_{jh}- x||^2
$$
where we used
Lemma \ref{lemma::quadratic}.
Hence,
$$
Z_n(x) - Z_n(m_{jh}) \geq 
c \sqrt{nh^d}  ||m_{jh}- x||^2
$$
where
$Z_n(x) = \sqrt{nh^d}(\hat p_{X,h}(x) - p_h(x))$.
It can be shown that
$\sup_{x\in {\cal X}}||Z_n(x)|| = O_P(1)$.
Hence,
$$
||m_{jh}- x||^2 \leq \frac{1}{c \sqrt{nh^d} }\sup_x ||Z_n(x)|| = O_P\left(\sqrt{\frac{1}{nh^d} }\right).
$$
Hence,
\begin{equation}
||x - m_{jh}|| = O_P\left(\frac{1}{nh^d} \right)^{1/4}.
\end{equation}
\end{proof}

{\bf Properties of the ESP Transformation.}
By construction, the $1-\alpha$ asymptotic confidence set ${\cal S}$ 
in \eqref{eq::setS} is a $d$-dimensional hypercube in $\Re^d$. 
The confidence interval ${\cal C}$ for $\lambda_1$ is
$$
R=R({\cal S}) = \Bigl[\inf\,\bigl\{a\in Q\bigr\},\ \ \sup\,\bigl\{a\in Q\bigr\} \Bigr]
$$
where
$Q=w^{-1}({\cal S})$.
In this section,
we bound the size of $R$.

Let
${\cal B}$ be the set of all symmetric
$d\times d$ matrices and let
${\cal E} = \bigl\{w(\lambda(A)):\ A\in {\cal B}\bigr\}$.
Thus, if
$s=(s_1,\ldots, s_d)\in {\cal E}$
then $w^{-1}(s)$ corresponds to
the eigenvalues of some symmetric matrix.
Let ${\cal S}={\cal S}(s_0,\epsilon)$ be any hyper-cube in $\mathbb{R}^d$:
$$
{\cal S}(s_0,\epsilon) = \Bigl\{ t\in\mathbb{R}^d:\ ||s-t||_\infty \leq \epsilon\Bigr\}
$$
for some $s_0$ and $\epsilon$.
We want to bound the size of
$$
R=R({\cal S}) = \Bigl[\inf\,\bigl\{a\in Q\bigr\},\ \ \sup\,\bigl\{a\in Q\bigr\} \Bigr]
$$
where
$Q=w^{-1}({\cal S}\bigcap {\cal E})$.

Each 
$s \in {\cal S}\cap {\cal E}$ defines a characteristic polynomial
\begin{equation}\label{eqn::polyn}
P_s(\lambda) = \prod_{i=1}^d (\lambda_i-\lambda) =
\lambda^d + \sum_{k=1}^d (-1)^k \ s_k  \ \lambda^{d-k} = 0
\end{equation}
whose roots are the eigenvalues of some symmetric matrix.

\begin{lemma} \label{lem::interv}
There exists $C>0$, 
depending only $\epsilon_0$ and $s_0$, such that,
for all $\epsilon< \epsilon_0$,
$$
C\epsilon + o(\epsilon) \leq \mu(R({\cal S}(s_0,\epsilon)\cap {\cal E}  )) \leq C \epsilon^{1/d}.
$$
\end{lemma}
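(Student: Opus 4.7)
The plan is to treat the two bounds separately: a coefficient-to-root perturbation estimate for the characteristic polynomial $P_s$ will yield the upper bound $C\epsilon^{1/d}$, while an explicit one-parameter perturbation of $\lambda_0=w^{-1}(s_0)$ inside $w^{-1}({\cal S}(s_0,\epsilon))$ will yield the lower bound $C\epsilon+o(\epsilon)$.

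For the upper bound, I would first note that every $s\in{\cal S}(s_0,\epsilon)$ satisfies $||s||_\infty\leq||s_0||_\infty+\epsilon_0$, so by Cauchy's root bound all roots of $P_s$ and of $P_{s_0}$ stay in a common bounded interval $I$ depending only on $s_0$ and $\epsilon_0$. For every $\lambda\in I$,
\[
|P_{s_0}(\lambda)-P_s(\lambda)|=\Bigl|\sum_{k=1}^{d}(-1)^k(s_{0,k}-s_k)\lambda^{d-k}\Bigr|\leq C_0\,\epsilon,
\]
with $C_0=C_0(s_0,\epsilon_0,d)$. Evaluating at $\lambda=\lambda_1(s)$, a root of $P_s$, gives $\prod_i|\lambda_1(s)-\lambda_{0,i}|\leq C_0\epsilon$, hence $\min_i|\lambda_1(s)-\lambda_{0,i}|\leq(C_0\epsilon)^{1/d}$. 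Since each $\lambda_{0,i}\leq\lambda_1(s_0)$, this yields $\lambda_1(s)\leq\lambda_1(s_0)+(C_0\epsilon)^{1/d}$; applying the same inequality with $s$ and $s_0$ swapped gives the opposite direction. The conclusion is $|\lambda_1(s)-\lambda_1(s_0)|\leq C\epsilon^{1/d}$ uniformly in $s\in{\cal S}(s_0,\epsilon)\cap{\cal E}$, and hence $\mu(R)\leq 2C\epsilon^{1/d}$.

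For the lower bound, I would set $\lambda_0=w^{-1}(s_0)$ and consider the one-parameter path $\lambda(t)=(\lambda_{0,1}+t,\lambda_{0,2},\ldots,\lambda_{0,d})$ for $t\geq 0$. Since $\lambda_{0,1}+t\geq\lambda_{0,1}\geq\lambda_{0,2}\geq\cdots$, the tuple $\lambda(t)$ remains non-increasing and therefore lies in ${\cal E}$ (it is realized as the diagonal of a symmetric matrix). Differentiating (\ref{eqn::one-to-one}) at $\lambda_0$ shows that each $\partial s_k/\partial\lambda_1$ is a polynomial in the trailing coordinates $(\lambda_{0,2},\ldots,\lambda_{0,d})$ and hence bounded by some $M=M(s_0)<\infty$, so $||w(\lambda(t))-s_0||_\infty\leq Mt+O(t^2)$. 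Consequently $\lambda(t)\in w^{-1}({\cal S}(s_0,\epsilon))$ whenever $t\leq\epsilon/M - O(\epsilon^2)$, and as $t$ sweeps such values the first coordinate of $\lambda(t)$ traces out an interval of length $\epsilon/M+o(\epsilon)$, giving $\mu(R)\geq\epsilon/M+o(\epsilon)$.

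The hard part will be justifying the $\epsilon^{1/d}$ exponent in the upper bound: this rate is sharp rather than an artifact of a loose estimate, because at an eigenvalue of full multiplicity, perturbing the constant coefficient of $(\lambda-a)^d$ by $\epsilon$ produces roots $a+\epsilon^{1/d}\zeta$ over $d$th roots of unity $\zeta$, so $w^{-1}$ is genuinely only Hölder continuous at points where eigenvalues coalesce, and the coefficient-to-root calculation above captures precisely this. The lower bound, by contrast, is insensitive to multiplicities: perturbing only the top coordinate always preserves the ordering and acts linearly on $w$ to first order, so the $C\epsilon$ rate is available regardless of any degeneracies in $\lambda_0$.
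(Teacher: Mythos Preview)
Your argument is correct. For the upper bound you and the paper do the same thing: bound $|P_{s_0}(\lambda)-P_s(\lambda)|\le C\epsilon$ on a common compact interval containing all roots, evaluate at the largest root of one polynomial, and read off $|\lambda_1(s)-\lambda_1(s_0)|\le (C\epsilon)^{1/d}$ from the product form of the other. The paper phrases the last step as $P_s(\tilde\lambda_1)=\prod_i(\tilde\lambda_1-\lambda_i)\ge(\tilde\lambda_1-\lambda_1)^d$ when $\tilde\lambda_1>\lambda_1$, while you use the geometric-mean inequality $\min_i|\lambda_1(s)-\lambda_{0,i}|\le(\prod_i|\lambda_1(s)-\lambda_{0,i}|)^{1/d}$; these are equivalent.

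For the lower bound the routes differ. The paper picks a point of ${\cal S}(s_0,\epsilon)\cap{\cal E}$ lying in the interior of ${\cal E}$, invokes continuous differentiability of $w^{-1}$ there, and Taylor-expands. Your explicit path $\lambda(t)=(\lambda_{0,1}+t,\lambda_{0,2},\ldots,\lambda_{0,d})$ works instead with the forward map $w$, which is globally smooth. Your approach is a bit more concrete: it needs no interior point and no assumption on multiplicities in $\lambda_0$, and since each $s_k$ is in fact affine in $\lambda_1$ (so your $O(t^2)$ term vanishes), it actually yields the clean bound $\mu(R)\ge \epsilon/M$ with $M=\max_k|e_{k-1}(\lambda_{0,2},\ldots,\lambda_{0,d})|\ge 1$. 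The paper's version is terser but leans on the implicit function theorem. Either argument suffices.
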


\begin{proof}
Without loss of generality, assume that $d$ is even.
(A simple modification of the proof works for $d$ odd.)
First, note that, 
there is some $L>0$ (depending on $s_0$ and $\epsilon_0$)
such that for all $\epsilon < \epsilon_0$
and all $s\in {\cal S}(s,\epsilon)$,
we have
$-L \leq \lambda_d(s) \leq \lambda_1(s) \leq L$.
Let
$s,\tilde{s}\in {\cal S}\bigcap {\cal E}$ so
that
$||\tilde s- s_0||_\infty \leq 2\sqrt{d}\epsilon$.
Let $P_{s}$ and $\tilde P_{\tilde s}$ be the polynomials corresponding to 
$s$ and $\tilde s$. Then
\begin{equation}\label{eqn::diff}
|P_s(\lambda) - \tilde{P}_{\tilde{s}}\ (\lambda)| \leq 
      \sum_{k=1}^d |(-1)^k|\ |s_k - \tilde{s}_k|\ |\lambda^{d-k}| \leq
     \ 2\sqrt{d} \ \epsilon \sum_{k=1}^d  |\lambda|^{d-k}\ \leq \ C \ \epsilon
\end{equation}
where
$C = 2\sqrt{d} \sum_{k=1}^d L^{d-k}$.
Let  $\lambda$ and $\tilde{\lambda}$ be the ordered eigenvalues of $P_s$ and 
$\tilde{P}_{\tilde s}$. 
First, suppose 
$\tilde{\lambda}_1 > \lambda_1$. For 
all $\lambda > \lambda_1$, the polynomial in \eqref{eqn::polyn}
can be written as $P_s(\lambda) = \prod_{i=1}^d (\lambda - \lambda_i)$
showing that it is an increasing function of $\lambda$, since each 
factor in the product is increasing. Let 
$\lambda_1 < t < \tilde{\lambda}_1$, 
then
$$
P_s\,(\tilde\lambda_1) = \prod_{i=1}^d (\tilde\lambda_1 - \lambda_i) \geq (\tilde\lambda_1 - \lambda_1)^d.
$$
From \eqref{eqn::diff}
$$
C\ \epsilon \ \geq |P_s(\tilde{\lambda}_1) - \tilde{P}_{\tilde s}\ (\tilde{\lambda}_1)| 
\ = \ |P_s(\tilde{\lambda}_1)|\ \geq\ \ (\tilde{\lambda}_1-\lambda_1)^d.
$$
Hence,
$\lambda_1 \leq \tilde\lambda_1 \leq (C\epsilon)^{1/d}$.
Now assume $\tilde{\lambda}_1 < \lambda_1$. 
Then
$$
\tilde{P}_{\tilde s}\ (\lambda_1) = 
\prod_{i=1}^d (\lambda_1 - \tilde{\lambda}_i) \geq (\lambda_1- \tilde{\lambda}_1)^d .
$$
Similarly, from \eqref{eqn::diff} 
$$
C\ \epsilon \ \geq |P_s(\lambda_1) - \tilde{P}_{\tilde s}(\lambda_1)| 
           \ = \ |\tilde{P}_{\tilde s}(\lambda_1)|\ \geq\  |\tilde{\lambda}_1 -\lambda_1|^d.
$$
Thus $|\tilde{\lambda}_1 -\lambda_1|< C\ \epsilon^{1/d}$. 
The lower bound follows by choosing some point 
$s \in {\cal S}(s_0,\epsilon)\cap {\cal E}$
that is in the interior of ${\cal E}$.
For such a point, $\lambda$ is a continuously differentiable function of $s$
and the bound follows from a simple Taylor expansion.
\end{proof}

{\bf Remark:}
The worst case is when
$\lambda_1 = \cdots = \lambda_d$ and
the characteristic polynomial is simply
$(\lambda_1-\lambda)^d$.
In that case, a small perturbation of $s$ can cause
a perturbation of $\lambda_1$ of size $O(\epsilon^{1/d})$.

\vspace{.1in}

{\bf Properties of the Confidence Interval and Test.}

\begin{theorem}\label{theorem::width}
Let ${\cal C}_j$ be the confidence interval for
$\gamma_{1j}=-\lambda_1({\cal H}_h(x))$
for any $x$.
Then the Lebesgue measure is
$$
\mu({\cal C}_j) = O_P\left( (n h^{d+4})^{-1/d}\right).
$$
\end{theorem}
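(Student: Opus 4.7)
The plan is to combine three ingredients: the central limit theorem (\ref{eq::clt}) for the estimated Hessian, Beran and Srivastava's bootstrap consistency for smooth functionals of the Hessian, and the worst-case ESP perturbation bound proved in Lemma \ref{lem::interv}.

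First I would use (\ref{eq::clt}) to get the componentwise rate $\hat{\cal H}_{Y,h}(\hat m_j) - {\cal H}_h(\hat m_j) = O_P((nh^{d+4})^{-1/2})$. This is legitimate at the candidate mode $\hat m_j$ because, thanks to the data splitting, $\hat m_j$ was selected from the independent first-half sample $X$ and so may be treated as deterministic when conditioning on $X$. Because each elementary symmetric polynomial $s_k = w_k(\lambda({\cal H}))$ is a polynomial in the eigenvalues and hence a smooth function of the entries of ${\cal H}$, the delta method yields $||\hat s - s||_\infty = O_P((nh^{d+4})^{-1/2})$.

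Next I would transfer the Beran--Srivastava argument to the smoothed-Hessian setting. Their result says that bootstrapping a smooth transformation of an asymptotically normal matrix yields consistent confidence sets. Taking $w$ as the smooth transformation and (\ref{eq::clt}) in place of their covariance CLT, the bootstrap law of $\sqrt{nh^{d+4}}(s^* - \hat s)$ approximates the limit law of $\sqrt{nh^{d+4}}(\hat s - s)$ in sup-norm. Consequently the $(1-\alpha/k)$ bootstrap quantile $q$ from step 5 of the algorithm satisfies $q = O_P((nh^{d+4})^{-1/2})$.

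Finally, the confidence set ${\cal S}$ is the hypercube of radius $q$ centered at $\hat s$, so Lemma \ref{lem::interv} applied with $\epsilon = q$ bounds $\mu(R({\cal S})) = \mu(w^{-1}({\cal S}\cap {\cal E}))$ by $C q^{1/d}$. The interval ${\cal C}_j$ in step 6 is the range of those bootstrap values $-\lambda_1^{*b}$ with $s^{*b} \in {\cal S}$; as $B \to \infty$ this interval fills out $R({\cal S})$, so ${\cal C}_j$ inherits the same measure bound, giving the stated rate after substituting the order of $q$. The main obstacle is the second step: Beran and Srivastava's theorem is stated for i.i.d.\ sample covariance matrices at the standard $\sqrt{n}$ scale, whereas here we need a bootstrap CLT for the kernel-smoothed Hessian at the slower scale $\sqrt{nh^{d+4}}$, uniformly well enough in sup-norm to control $q$ at the correct rate. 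Once this is in place the polynomial smoothness of $w$ and the application of Lemma \ref{lem::interv} are routine.
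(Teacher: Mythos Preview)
Your proposal is correct and follows essentially the same route as the paper's own (brief) proof outline: smoothness of $s=w(\lambda(\cdot))$ as a function of the Hessian entries, the CLT (\ref{eq::clt}) giving the $(nh^{d+4})^{-1/2}$ rate for the Hessian and hence for $\hat s - s$ and the bootstrap half-width $q$, and then Lemma~\ref{lem::interv} to convert the $O_P$ hypercube radius into the $O_P((nh^{d+4})^{-1/d})$ bound on $\mu({\cal C}_j)$. You are in fact more careful than the paper, which simply asserts that the confidence rectangle for $s$ has size of order $\epsilon_n$; your explicit identification of the gap --- that Beran--Srivastava's bootstrap consistency must be re-established at the kernel scale $\sqrt{nh^{d+4}}$ rather than the classical $\sqrt{n}$ scale --- is exactly the step the paper sweeps under ``it may then be shown.''
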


{\bf Proof Outline.}
We can write
$s = f({\cal H}_h)$
for some smooth, continuously differentiable function $f$.
The asymptotic variance
of $\hat{\cal H}_{Y,h}$ is of order
$\epsilon_n$ where
$\epsilon_n =O_P( (n h^{d+4})^{1/2})$.
It may then be shown that the $1-\alpha$ confidence rectangle for $s$
has size of order $\epsilon_n$.
The result then follows 
From Lemma \ref{lem::interv},
the size of of ${\cal C}_j$ is
$O(\epsilon_n^{1/d})$. $\Box$

\begin{lemma}
Let $B_j = B(m_j,\epsilon)$.
We have:
\begin{enum}
\item $\mathbb{P}(B_j \cap {\cal M}^\dagger\neq \emptyset\ {\rm for\ all\ }j)\to 1$.
\item Let $B_0 = \{x:\ \lambda_1({\cal H}_h(x)) \geq 0\}$.
Then $\limsup_{n\to\infty}\mathbb{P}(\hat{\cal M}^\dagger\cap B_0 \neq\emptyset)\leq \alpha$.
\end{enum}
\end{lemma}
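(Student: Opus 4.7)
The two claims correspond to the \emph{power} and \emph{size} of the simultaneous test. I would handle them separately, leaning on the machinery already assembled: Lemma \ref{lemma::mode-hat} for the behavior of the stage-one modes, Lemma \ref{lemma::quadratic} for the quadratic bound at each $m_j$, Theorem \ref{theorem::width} for the vanishing width of each ${\cal C}_j$, and the Bonferroni-corrected coverage guaranteed by the ESP bootstrap of Section \ref{section::esp}. Throughout the argument, data splitting is exploited so that, conditionally on $X$, the candidate modes $\hat m_j$ are deterministic points at which the stage-two CLT (\ref{eq::clt}) and Corollary 1 of \cite{beran1985bootstrap} may be applied directly.

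For part (1) the plan is as follows. By Lemma \ref{lemma::mode-hat}, with probability tending to one each ball $B_j=B(m_j,\epsilon)$ contains exactly one candidate mode $\hat m_j\in\hat{\cal M}$, and $\|\hat m_j-m_{jh}\|=O_P((nh^d)^{-1/4})$. Shrinking $\epsilon$ so that Lemma \ref{lemma::quadratic} applies on $B_j$, the top eigenvalue of the population Hessian $\Hess_h$ of $p_h$ satisfies $\lambda_1(\Hess_h(\hat m_j))\le-\delta$, so the target parameter $\gamma_{1j}=-\lambda_1(\Hess_h(\hat m_j))\ge\delta>0$. Conditionally on $X$ the point $\hat m_j$ is fixed, so the stage-two estimator $\hat\gamma_{1j}$ is consistent, and Theorem \ref{theorem::width} gives $\mu({\cal C}_j)=O_P((nh^{d+4})^{-1/d})=o_P(1)$. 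Therefore ${\cal C}_j\subset(\delta/2,\infty)$ with probability tending to one, $H_{0j}$ is rejected, and $\hat m_j\in{\cal M}^\dagger$.

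For part (2), by construction each ${\cal C}_j$ is a $1-\alpha/k$ confidence interval, and the Bonferroni argument from Section \ref{section::testing}, together with the bootstrap validity from \cite{beran1985bootstrap} and the CLT (\ref{eq::clt}) applied conditionally on $X$, yields
\[
\liminf_{n\to\infty}\mathbb{P}\bigl(\gamma_{1j}\in{\cal C}_j\text{ for all }j=1,\dots,k\bigr)\ge1-\alpha.
\]
Any $\hat m_j\in\hat{\cal M}^\dagger\cap B_0$ must simultaneously satisfy $\gamma_{1j}\le0$ (by definition of $B_0$) and $\inf{\cal C}_j>0$ (by the rejection rule), hence $\gamma_{1j}\notin{\cal C}_j$. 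Consequently
\[
\{\hat{\cal M}^\dagger\cap B_0\neq\emptyset\}\ \subseteq\ \{\exists j:\gamma_{1j}\notin{\cal C}_j\},
\]
and the right-hand event has asymptotic probability at most $\alpha$.

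The main obstacle is ensuring that the simultaneous coverage statement genuinely holds for the \emph{random} collection of candidate modes. Data splitting is what makes this clean: conditionally on $X$, both $k$ and the points $\hat m_j$ are fixed, so the bootstrap CLT applies at each of them, and $\alpha/k$ Bonferroni gives conditional simultaneous coverage $\ge 1-\alpha$; taking expectation over $X$ preserves the bound. A secondary technical point, arising in part (1), is that the consistency of $\hat\gamma_{1j}$ for $\gamma_{1j}$ at the data-dependent point $\hat m_j$ is not entirely free, because $\lambda_1$ is not smooth; this is precisely what the ESP detour and Theorem \ref{theorem::width} are designed to finesse, so one needs only the vanishing width together with the strict sign $\gamma_{1j}\ge\delta$ rather than any delta-method expansion at a non-smooth point.
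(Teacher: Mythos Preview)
Your proposal is correct and follows essentially the same route as the paper: for (1) you combine Lemma~\ref{lemma::mode-hat} with the shrinking width from Theorem~\ref{theorem::width} and the strict negativity of the Hessian eigenvalue on $B_j$ (from Lemma~\ref{lemma::quadratic}); for (2) you use the conditional-on-$X$ Bonferroni coverage of the ESP bootstrap together with the observation that a false rejection at a point of $B_0$ forces $\gamma_{1j}\notin{\cal C}_j$. The paper's proof is organized identically, though it is terser---in particular it does not spell out the role of the uniform bound $\gamma_{1j}\ge\delta$ in (1), nor the set inclusion $\{\hat{\cal M}^\dagger\cap B_0\neq\emptyset\}\subseteq\{\exists j:\gamma_{1j}\notin{\cal C}_j\}$ in (2), both of which you make explicit.
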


\begin{proof}
(1) In parts (1) and (2) of Lemma \ref{lemma::mode-hat}
we showed there exists one mode
$\hat m_{jh}\in B_j$ with zero gradient and
negative eigenvalues.
In Theorem \ref{theorem::width},
we showed that the width of the confidence interval for the first
eigenvalue of the Hessian
at $\hat m_{jh}$ shrinks to 0.
This implies that, with probability tending to 1,
the test rejects the null and hence
$\hat m_{jh}$ is included in ${\cal M}^\dagger$.

(2) Let $x\in B_0$.
Then $x\in {\cal M}^\dagger$ if and only if
$x\in\hat{\cal M}$ and if
the confidence interval excludes the true value of
$\lambda_1({\cal H}_h(x))$.
Let $U = \hat{\cal M}\cap B_0$.
Conditional on $X$,
the probability that the test rejects the null for any $z\in Z$
has, asymptotically, probability at most $\alpha/k$.
Hence,
$\mathbb{P}(U \neq \emptyset|X) \leq \alpha + o(1)$ and,
by the independence of $X$ and $Y$,
$\mathbb{P}(U \neq \emptyset) \leq \alpha + o(1)$.
\end{proof}

{\bf When the Bandwidth is Small.}
When $h$ is small, we get spurious modes
which are killed off by the hypothesis test.
This behavior is clear in the examples.
Intuitively, it follows since the size of confidence rectangle
increases as $h$ decreases.
We have seen numerically that
this prevents us from choosing
a bandwidth that is too small
because the number of significant modes
becomes 0 when $h$ is too small.
Making this fact rigorous remains an open question.
When $h$ gets very small, the usual
asymptotic methods no longer apply.
It is possible that
uniform-in-bandwidth asymptotics
(\cite{einmahl2005uniform})
might be useful here
but this is beyond the scope of the paper
and we leave this to future work.

\section{Discussion}
\label{section::discussion}

We have introduced a new method for testing the significance
of modes in density estimators.
There are several ideas that
we hope to deal with in future work.
These include the following:

\begin{enumerate}
\item Our method complements the approaches
in \cite{duong2008feature} and
\cite{chazal2011persistence}
by providing extra information
about the estimated modes.
A thorough investigation
into combining the strengths
of all three methods deserves
future work.
\item If one makes specific assumptions about the size and separation of the modes,
then it should be possible to find the asymptotic power of the test.
\item We indicated a possible method for choosing the bandwidth
for mode hunting.
Deriving precise theoretical properties
of the method 
will require techniques that allow small bandwidths.
\item 
The ultimate goal of this line of work is to show that
the clusters based on the significant modes
are a good approximation to the
population clusters
${\cal A}_1,\ldots, {\cal A}_{k_0}$
defined in
(\ref{eq::ascending}).
The results in this paper
are only a first step towards that goal.
We would like to show, in fact, that
with high probability,
${\cal A}_j$ contains one and only one significant mode.
Furthermore,
\cite{chacon} suggest an interesting risk function
for mode clustering.
We conjecture that deleting non-significant modes
before clustering may improve the risk of mode-based clustering.
Also, we conjecture that our bandwidth selection method
will lead to good clustering risk.
\end{enumerate}

\bibliography{paper}

\begin{thebibliography}{30}
\providecommand{\natexlab}[1]{#1}
\providecommand{\url}[1]{\texttt{#1}}
\expandafter\ifx\csname urlstyle\endcsname\relax
  \providecommand{\doi}[1]{doi: #1}\else
  \providecommand{\doi}{doi: \begingroup \urlstyle{rm}\Url}\fi

\bibitem[Arias-Castro et~al.(2013)Arias-Castro, Mason, and
  Pelletier]{castromason}
Ery Arias-Castro, David Mason, and Bruno Pelletier.
\newblock On the estimation of the gradient lines of a density and the
  consistency of the mean-shift algorithm.
\newblock \emph{Unpublished Manuscript}, 2013.

\bibitem[Balakrishnan et~al.(2013)Balakrishnan, Fasy, Lecci, Rinaldo, Singh,
  and Wasserman]{balakrishnan2013statistical}
Sivaraman Balakrishnan, Brittany Fasy, Fabrizio Lecci, Alessandro Rinaldo,
  Aarti Singh, and Larry Wasserman.
\newblock Statistical inference for persistent homology.
\newblock \emph{arXiv preprint arXiv:1303.7117}, 2013.

\bibitem[Beran and Srivastava(1985)]{beran1985bootstrap}
Rudolf Beran and Muni~S Srivastava.
\newblock Bootstrap tests and confidence regions for functions of a covariance
  matrix.
\newblock \emph{The Annals of Statistics}, pages 95--115, 1985.

\bibitem[Cadre(2006)]{cadre2006kernel}
B.~Cadre.
\newblock Kernel estimation of density level sets.
\newblock \emph{Journal of multivariate analysis}, 97\penalty0 (4):\penalty0
  999--1023, 2006.

\bibitem[Chac{\'o}n(2012)]{chacon}
Chac{\'o}n.
\newblock Clusters and water flows: a novel approach to modal clustering
  through morse theory.
\newblock \emph{arXiv preprint arXiv:1212.1384}, 2012.

\bibitem[Chac{\'o}n and Monfort(2013)]{chaconMonfort}
J.~Chac{\'o}n and P.~Monfort.
\newblock A comparison of bandwidth selectors for mean shift clustering.
\newblock \emph{arXiv preprint arXiv:1310.7855}, 2013.

\bibitem[Chac{\'o}n and Duong(2010)]{chacon2010multivariate}
JE~Chac{\'o}n and T.~Duong.
\newblock Multivariate plug-in bandwidth selection with unconstrained pilot
  bandwidth matrices.
\newblock \emph{Test}, 19\penalty0 (2):\penalty0 375--398, 2010.

\bibitem[Chac{\'o}n et~al.(2011)Chac{\'o}n, Duong, and
  Wand]{chacon2011asymptotics}
J.E. Chac{\'o}n, T.~Duong, and MP~Wand.
\newblock Asymptotics for general multivariate kernel density derivative
  estimators.
\newblock \emph{Statistica Sinica}, 21:\penalty0 807--840, 2011.

\bibitem[Chacon and Duong(2013)]{ChaconAndDuong}
Jose Chacon and Tarn Duong.
\newblock Data-driven density derivative estimation, with applications to
  nonparametric clustering and bump hunting.
\newblock \emph{Electronic Journal of Statistics}, 7:\penalty0 1935--2524,
  2013.

\bibitem[Chazal et~al.(2011)Chazal, Guibas, Oudot, and
  Skraba]{chazal2011persistence}
F.~Chazal, L.J. Guibas, S.Y. Oudot, and P.~Skraba.
\newblock Persistence-based clustering in riemannian manifolds.
\newblock In \emph{Proceedings of the 27th annual ACM symposium on
  Computational geometry}, pages 97--106. ACM, 2011.

\bibitem[Comaniciu and Meer(2002)]{Comaniciu}
D.~Comaniciu and P.~Meer.
\newblock Mean shift: a robust approach toward feature space analysis.
\newblock \emph{Pattern Analysis and Machine Intelligence, IEEE Transactions
  on}, 24\penalty0 (5):\penalty0 603 --619, may 2002.
\newblock ISSN 0162-8828.
\newblock \doi{10.1109/34.1000236}.

\bibitem[Donoho and Liu(1991)]{donoho1991geometrizing}
David~L Donoho and Richard~C Liu.
\newblock Geometrizing rates of convergence, iii.
\newblock \emph{The Annals of Statistics}, pages 668--701, 1991.

\bibitem[D{\"u}mbgen and Walther(2008)]{dumbgen2008multiscale}
L.~D{\"u}mbgen and G.~Walther.
\newblock Multiscale inference about a density.
\newblock \emph{The Annals of Statistics}, 36\penalty0 (4):\penalty0
  1758--1785, 2008.

\bibitem[Duong et~al.(2008)Duong, Cowling, Koch, and Wand]{duong2008feature}
Tarn Duong, Arianna Cowling, Inge Koch, and MP~Wand.
\newblock Feature significance for multivariate kernel density estimation.
\newblock \emph{Computational Statistics \& Data Analysis}, 52\penalty0
  (9):\penalty0 4225--4242, 2008.

\bibitem[Edelsbrunner and Harer(2008)]{edelsbrunner2008persistent}
Herbert Edelsbrunner and John Harer.
\newblock Persistent homology-a survey.
\newblock \emph{Contemporary mathematics}, 453:\penalty0 257--282, 2008.

\bibitem[Einmahl and Mason(2005)]{einmahl2005uniform}
Uwe Einmahl and David~M Mason.
\newblock Uniform in bandwidth consistency of kernel-type function estimators.
\newblock \emph{The Annals of Statistics}, 33\penalty0 (3):\penalty0
  1380--1403, 2005.

\bibitem[Fukunaga and Hostetler(1975)]{Fukunaga}
Keinosuke Fukunaga and Larry~D. Hostetler.
\newblock The estimation of the gradient of a density function, with
  applications in pattern recognition.
\newblock \emph{IEEE Transactions on Information Theory}, 21:\penalty0 32--40,
  1975.

\bibitem[Genovese et~al.(2013)Genovese, Perone-Pacifico, Verdinelli, and
  Wasserman]{GPVW}
Christopher~R. Genovese, Marco Perone-Pacifico, Isabella Verdinelli, and Larry
  Wasserman.
\newblock Nonparametric ridge estimation.
\newblock \emph{arXiv preprint arXiv:1212.5156v1}, 2013.

\bibitem[Gin{\'e} and Guillou(2002)]{gine2002rates}
E.~Gin{\'e} and A.~Guillou.
\newblock Rates of strong uniform consistency for multivariate kernel density
  estimators.
\newblock In \emph{Annales de l'Institut Henri Poincare (B) Probability and
  Statistics}, volume~38, pages 907--921. Elsevier, 2002.

\bibitem[Godtliebsen et~al.(2002)Godtliebsen, Marron, and
  Chaudhuri]{godtliebsen2002significance}
F~Godtliebsen, JS~Marron, and Probal Chaudhuri.
\newblock Significance in scale space for bivariate density estimation.
\newblock \emph{Journal of Computational and Graphical Statistics}, 11\penalty0
  (1):\penalty0 1--21, 2002.

\bibitem[Klemel{\"a}(2005)]{klemela2005adaptive}
J.~Klemel{\"a}.
\newblock Adaptive estimation of the mode of a multivariate density.
\newblock \emph{Journal of Nonparametric Statistics}, 17\penalty0 (1):\penalty0
  83--105, 2005.

\bibitem[Klemel{\"a}(2009)]{klemela2009smoothing}
J.~Klemel{\"a}.
\newblock \emph{Smoothing of Multivariate Data: Density Estimation and
  Visualization}.
\newblock Wiley, 2009.

\bibitem[Konakov(1974)]{konakov1974asymptotic}
VD~Konakov.
\newblock On the asymptotic normality of the mode of multidimensional
  distributions.
\newblock \emph{Theory of Probability \& Its Applications}, 18\penalty0
  (4):\penalty0 794--799, 1974.

\bibitem[Li et~al.(2007)Li, Ray, and Lindsay]{li2007nonparametric}
J.~Li, S.~Ray, and B.G. Lindsay.
\newblock A nonparametric statistical approach to clustering via mode
  identification.
\newblock \emph{Journal of Machine Learning Research}, 8\penalty0 (8):\penalty0
  1687--1723, 2007.

\bibitem[Mammen et~al.(1992)Mammen, Marron, and Fisher]{mammen1992some}
Enno Mammen, James~S Marron, and Nick~I Fisher.
\newblock Some asymptotics for multimodality tests based on kernel density
  estimates.
\newblock \emph{Probability Theory and Related Fields}, 91\penalty0
  (1):\penalty0 115--132, 1992.

\bibitem[Matsumoto(2002)]{Matsumoto}
Y.~Matsumoto.
\newblock \emph{An Introduction to Morse Theory}.
\newblock American Mathematical Society, 2002.

\bibitem[M{\"u}ller and Sawitzki(1991)]{muller1991excess}
Dietrich~Werner M{\"u}ller and G{\"u}nther Sawitzki.
\newblock Excess mass estimates and tests for multimodality.
\newblock \emph{Journal of the American Statistical Association}, 86\penalty0
  (415):\penalty0 738--746, 1991.

\bibitem[Polonik(1995)]{polonik1995measuring}
W.~Polonik.
\newblock Measuring mass concentrations and estimating density contour
  clusters-an excess mass approach.
\newblock \emph{The Annals of Statistics}, pages 855--881, 1995.

\bibitem[Silverman(1981)]{silverman1981using}
Bernard~W Silverman.
\newblock Using kernel density estimates to investigate multimodality.
\newblock \emph{Journal of the Royal Statistical Society. Series B
  (Methodological)}, pages 97--99, 1981.

\bibitem[Walther(1997)]{walther1997granulometric}
G.~Walther.
\newblock Granulometric smoothing.
\newblock \emph{The Annals of Statistics}, pages 2273--2299, 1997.

\end{thebibliography}

\end{document}